\newtheorem{theorem}{Theorem}[section]
\newtheorem{lemma}{Lemma}[subsection]
\newtheorem{corollary}[lemma]{Corollary}
\newcommand{\wT}{w^1}
\newcommand{\wC}{w^0}
\newcommand{\wTbar}{\overline{w}^1}
\newcommand{\wCbar}{\overline{w}^0}
\newcommand{\wtbar}{\overline{w}^t}
\newcommand{\wbar}{\overline{w}}
\newcommand{\barbarY}{\bar{\bar{Y}}}
\newcommand{\barbary}{\bar{\bar{y}}}
\newcommand{\xbarT}{\xbarbar^1}
\newcommand{\xbarC}{\xbarbar^0}
\newcommand{\xbarbar}{\overline{\overline{\bm{x}}}}
\theoremstyle{remark}
\newtheorem*{remark}{Remark}
\title{Supplementary Materials for ``Design-Based Ratio Estimators and Central Limit Theorems for Clustered, Blocked RCTs''}
\date{}
\begin{document}
\includepdf[pages=1-last]{"Design_Based_Ratio_Estimators".pdf}

\appendix
\setcounter{page}{1}
\counterwithin{table}{section}
\maketitle

\section{Proofs for asymptotic results}\label{sec:main_tools}

\subsection{Asymptotic properties of ratio estimators}\label{sec:main_tools}
In this section, we develop two results that will help us prove asymptotic normality of our weighted least squares regression estimators.
We derive these results generally, assuming arbitrary cluster level\footnote{The same mathematical arguments are immediately extendable to the case where we have individuals rather than clusters.} potential outcomes $C_j(t)$ for cluster $j$ with $t \in \{0, 1\}$.
These results could alternatively be obtained using a finite population delta method \citep[see][]{pashley2019delta}.

\subsubsection{Asymptotic normality of one ratio}\label{subsec:asy_one_ratio_main}
We first derive a result on the asymptotic distribution of a single ratio estimator, very similar to Theorem 1 in \citet{scott1981asymptotic}, but using conditions from \citet{li2017general}.
To do this, we redefine the potential outcomes in a way that allows us to use known normality results for an estimator that is close to our ratio estimator.
Then we use an application of Slutsky's theorem to get the normality result for our estimator of interest.
Before the result, we must define some notation.
Let us have arbitrary cluster level potential outcomes $C_j(t)$ for $t \in \{0, 1\}$ and cluster weights $w_j$, with $T_j$ being the indicator of treatment assignment for cluster $j$.
Following \citet{li2017general}, our finite population is within a sequence of finite populations where $m$ and $m^t$ go to $\infty$, which naturally holds if we grow the population with a fixed proportion of treated clusters, $p$.
Further define
\begin{align*}
\wbar &= \frac{1}{m}\sum_{j = 1}^{m}w_{j},\\
\wtbar &= \frac{1}{m^t}\sum_{j: T_j = t}w_{j},\\
\overline{\overline{C}}(t) &= \frac{\frac{1}{m}\sum_{j = 1}^{m}w_jC_j(t)}{\wbar} \quad \text{(the overall weighted mean of $C_j(t)$)},\\
\overline{\overline{c}}(t) &= \frac{\frac{1}{m^t}\sum_{j: T_j=t}w_jC_j(t)}{\wtbar},
\end{align*}
and
\begin{align*}
\hat{\bar{C}}(t) &= \overline{\overline{c}}(t)\wbar.
\end{align*}

In particular, we are interested in the asymptotic distribution of the ratio estimator $\overline{\overline{c}}(t)$, unlike \citet{scott1981asymptotic} who found the distribution for an estimator closer to the form $\hat{\bar{C}}(t)$.
In our case, $\overline{\overline{c}}(t)$ is an estimator of $\overline{\overline{C}}(t)$.
We will see this is a straightforward change.

Define the following finite population variance of the weights:
\begin{align*}
S^2(w) &= \frac{1}{m-1}\sum_{j=1}^{m}(w_j - \wbar)^2.
\end{align*}
Let $D_j(t)$ by a scaled deviation of $C_j(t)$ from the overall weighted mean.
That is,
\[D_{j}(t) = \left(w_jC_j(t) - w_{j}\overline{\overline{C}}(t)\right)/\wbar.\]
Then the mean of the $D_{j}(t)$'s, $\bar{D}(t)$, equals 0 as 
\[\sum_{j=1}^{m}w_j\overline{\overline{C}}(t) = m\wbar \overline{\overline{C}}(t) = \sum_{j=1}^{m}w_jC_j(t),\]
and the variance expression for the $D_j(t)$, which is a weighted variance of the $C_j(t)$, is
\[S^2_D(t) = \frac{1}{m-1}\sum_{j=1}^{m}D_{j}^2(t) = \frac{1}{m-1}\sum_{j=1}^{m}\frac{w_{j}^2}{\wbar^2}\left(C_j(t)-\overline{\overline{C}}(t)\right)^2.\]

\begin{lemma}\label{lem:main_res}
Assume we have the following conditions:
\begin{enumerate}[(a)]
\item Defining $g(t)= \text{max}_{1 \leq j \leq m}(D_{j}(t))^2$
\begin{align}
\frac{1}{\text{min}(m^1, m-m^1)}\frac{g(t)}{S^2_{D}(t)} \to 0, \quad \text{as $m \to \infty$.} \label{eq:var_cond_g}
\end{align}
\item For $t \in \{1, 0\}$,
\begin{align}
\left(1-f^t\right)\frac{S^2(w)}{m^t\wbar^2} \to 0 \text{ as } m \to \infty. \label{eq:var_cond}
\end{align}
\end{enumerate}
Under these conditions,
\begin{align}
 \frac{\overline{\overline{c}}(t)- \overline{\overline{C}}(t)}{\sqrt{\frac{1-f^t}{m^t}}S_{D}(t)/\wbar}
& \xrightarrow[]{d} N(0,1). \label{eq:result_one_ratio}
\end{align}
\end{lemma}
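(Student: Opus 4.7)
The plan is to reduce the claim to a finite-population central limit theorem for a simple random sample mean via a Slutsky argument. Unfolding the definition of $D_j(t)$ and using $\bar{D}(t)=0$ gives the algebraic identity
\[
\overline{\overline{c}}(t) - \overline{\overline{C}}(t) = \frac{\wbar}{\wtbar}\,\bar{d}(t), \qquad \text{where } \bar{d}(t) = \frac{1}{m^t}\sum_{j:\,T_j=t} D_j(t).
\]
So $\bar{d}(t)$ is a centered SRS sample mean drawn from the finite population $\{D_j(t)\}_{j=1}^m$ (treatment is completely randomized in the Li--Ding framework), which puts the problem in the form needed to invoke an off-the-shelf CLT.

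Next I would apply the finite-population CLT of \citet{li2017general} to the $D_j(t)$'s. Condition (a) is precisely their Lindeberg-type hypothesis, with $g(t)=\max_j D_j(t)^2$ controlling the largest squared deviation, $\min(m^1,m-m^1)$ playing the role of the effective sample size, and $S_D^2(t)$ serving as the population variance. This delivers
\[
\frac{\bar{d}(t)}{\sqrt{(1-f^t)/m^t}\; S_D(t)} \xrightarrow[]{d} N(0,1).
\]

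Then I would use condition (b) to show $\wtbar/\wbar \to 1$ in probability. Since $\wtbar$ is an SRS sample mean of the weights, $E[\wtbar]=\wbar$ and $\mathrm{Var}(\wtbar) = \frac{1-f^t}{m^t}S^2(w)$, so condition (b) gives $\mathrm{Var}(\wtbar/\wbar)\to 0$ and Chebyshev yields $\wtbar/\wbar \to 1$ in probability, hence also $\wbar/\wtbar \to 1$ in probability. Combining this with the CLT above via Slutsky's theorem applied to the identity in the first display produces the asymptotic normality statement~\eqref{eq:result_one_ratio}.

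The main obstacle is bookkeeping rather than deep probability: verifying that condition (a) exactly matches the Lindeberg condition assumed in the finite-population CLT of \citet{li2017general}, and tracking how the factor $\wbar$ propagates through the definition of $D_j(t)$ and $S_D^2(t)$ so that the denominator in~\eqref{eq:result_one_ratio} coincides with the asymptotic standard deviation produced by the Slutsky argument. The CLT invocation itself is essentially a black-box step once the Lindeberg condition is confirmed.
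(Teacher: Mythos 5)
Your proposal is correct and follows essentially the same route as the paper: the identity $\overline{\overline{c}}(t)-\overline{\overline{C}}(t) = (\wbar/\wtbar)\,\bar{d}(t)$, the finite-population CLT of \citet{li2017general} applied to $\bar{d}(t)$ under condition (a), convergence $\wtbar/\wbar \xrightarrow{p} 1$ under condition (b), and Slutsky. The only cosmetic difference is that you justify $\wtbar/\wbar \xrightarrow{p} 1$ by a direct Chebyshev argument, whereas the paper cites Theorem B of \citet{scott1981asymptotic}, which encapsulates the same calculation.
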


\begin{proof}
The proof of this result closely follows the proof in \citet{scott1981asymptotic}. 
We are interested in the asymptotic distribution of the ratio estimator $\overline{\overline{c}}(t)$, but to find this we will first find the distribution of $\hat{\bar{C}}(t)$.

Let $\bar{d}(t) = \sum_{j : T_j = t}D_{j}(t)/m^t$.
Then
\begin{align*}
\bar{d}(t) 
&=\left(\overline{\overline{c}}(t) - \overline{\overline{C}}(t)\right)\frac{\wbar^t}{\wbar}.
\end{align*}
Assuming our first condition given by Equation~\ref{eq:var_cond_g}, by Theorem 1 of \citet{li2017general},
\[\frac{\bar{d}(t)}{\sqrt{\frac{1-f^t}{m^t}}S_{D}(t)} \xrightarrow[]{d} N(0,1).\]
Further, from Theorem B of \citet{scott1981asymptotic}, given our second condition given by Equation~\ref{eq:var_cond},
\[\wbar^t/\wbar - 1 \xrightarrow[]{p}  0 \text{ as } m \to \infty.\]
Thus, $\wbar^t/\wbar \xrightarrow[]{p} 1$.
Note the condition on the weights would cause issues if the average weight were to go to 0 as the sample size increased, but, as discussed in Section~\ref{subsec:cons_no_cov}, this is easy to avoid by using unnormalized weights.
This in turn means by Slutsky's Theorem, recalling that $\overline{\overline{c}}(t) - \overline{\overline{C}}(t) = \bar{d}(t)\wbar/\wbar^t$, we have our result in Equation~\ref{eq:result_one_ratio}. 


\end{proof}

\subsubsection{Asymptotic normality of difference of ratios (Main result)}\label{subsec:asy_two_ratios_main}
We will now show the asymptotic normality of a difference of ratios.
In particular, we are interested in the asymptotic distribution of $\overline{\overline{c}}(1)- \overline{\overline{c}}(0) $, using notation from Section~\ref{subsec:asy_one_ratio_main}.
We will again define new potential outcomes that are easier to work with than the direct ratio.
First we define some notation.
Construct new potential outcomes $D_j(t) = (w_jC_j(t)-w_j\overline{\overline{C}}(t))/\wbar$ with notation defined in Section~\ref{subsec:asy_one_ratio_main}, so $\bar{D}(1) = 0$, $\bar{D}(0) = 0$ and
\begin{align*}
\bar{d}(t) &= \frac{1}{m^t}\sum_{j: T_j =t}D_j(t).
\end{align*}
Further define an intermediate estimator $\hat{D} = \bar{d}(1) - \bar{d}(0)$ and
\[S^2_D(1,0) = \frac{1}{m-1}\sum_{j=1}^mD_j(1)D_j(0).\]
Recall for $t \in \{0,1\}$
\[S^2_D(t) = \frac{1}{m-1}\sum_{j=1}^mD_j^2(t).\]
and further define
\[S^2(D) = \frac{1}{m-1}\sum_{j=1}^m(D_j(1) - D_j(0))^2.\]
We will use these notations to get the final result.


\begin{lemma}\label{lemma:two_ratio_main}
Let us have the following conditions\footnote{It is interesting to note that a delta method approach would rely on slightly different conditions. In particular, the delta method does not explicitly require limiting values on asymptotic variances, as given in the second condition, but it does require a convergence in probability result that is satisfied by those limiting values for the asymptotic variances. Comparing this derivation to proofs via the delta method is left for future work.}:
\begin{enumerate}[(a)]
\item Let $g(t) = \text{max}_{1 \leq j \leq m}(D_j(t))^2$ and as $m \to \infty$ 
\begin{align}
\text{max}_{t \in \{0,1\}} \frac{1}{(m^t)^2}\frac{g(t)}{\text{Var}(\hat{D})} \to 0.\label{cond:two_ratio_1}
\end{align}
\item $m^t/m$ has a limiting value in (0,1) and $S^2_D(t)$ and $S^2_D(1,0)$ have limiting values.
\item For $t \in \{1, 0\}$,
\begin{align}
\left(1-f^t\right)\frac{S^2(w)}{m^t\wbar^2} \to 0 \text{ as } m \to \infty. \label{eq:var_cond2}
\end{align}
\end{enumerate}
Then we have that
\begin{align}
\frac{\left(\overline{\overline{c}}(1)- \overline{\overline{c}}(0) \right) - \left(\overline{\overline{C}}(1) - \overline{\overline{C}}(0)\right)}{\sqrt{\text{Var}(\hat{D})}}  \xrightarrow[]{d} N(0, 1) \label{eq:main_res}
\end{align}
where
\[\text{Var}(\hat{D}) = \frac{S^2_D(1)}{m^1} + \frac{S^2_D(0)}{m^0} - \frac{S^2(D)}{m}.\]
\end{lemma}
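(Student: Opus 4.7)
The plan is to mirror the strategy used in Lemma~\ref{lem:main_res}: first establish asymptotic normality of the intermediate difference-in-means estimator $\hat{D} = \bar{d}(1) - \bar{d}(0)$ on the transformed potential outcomes $D_j(t)$, and then translate back to a CLT for the difference of ratios via an application of Slutsky's theorem. Because $\bar{D}(1) = \bar{D}(0) = 0$ by construction, a CLT on $\hat{D}$ is automatically centered at the right population quantity.

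First, I would treat $\hat{D}$ as an ordinary difference-in-means estimator for a completely randomized experiment in which the ``potential outcomes'' are the $D_j(t)$'s. Applying Theorem~1 of \citet{li2017general} in its difference-in-means form then yields
\[
\frac{\hat{D}}{\sqrt{\text{Var}(\hat{D})}} \xrightarrow[]{d} N(0,1).
\]
Condition~(a) is precisely the Lindeberg-type maximum condition appearing in that theorem, this time normalized by $\text{Var}(\hat{D})$ rather than by a single-arm variance, while condition~(b) supplies the limiting treatment proportion and the limiting variances and covariance that Li and Ding require so that $\text{Var}(\hat{D})$ stabilizes to a nondegenerate limit.

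Next, I would connect $\hat{D}$ to the target quantity through the identity already derived in the proof of Lemma~\ref{lem:main_res},
\[
\bar{d}(t) = \left(\overline{\overline{c}}(t) - \overline{\overline{C}}(t)\right)\frac{\wbar^t}{\wbar},
\]
which gives
\[
\left(\overline{\overline{c}}(1) - \overline{\overline{c}}(0)\right) - \left(\overline{\overline{C}}(1) - \overline{\overline{C}}(0)\right) = \bar{d}(1)\frac{\wbar}{\wbar^1} - \bar{d}(0)\frac{\wbar}{\wbar^0}.
\]
Since condition~(c) is the same weights condition as in Lemma~\ref{lem:main_res}, Theorem~B of \citet{scott1981asymptotic} applied arm by arm delivers $\wbar/\wbar^t \xrightarrow[]{p} 1$ for each $t$. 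Writing the scaled target as $\hat{D}/\sqrt{\text{Var}(\hat{D})}$ plus a remainder of the form
\[
\frac{\bar{d}(1)}{\sqrt{\text{Var}(\hat{D})}}\left(\frac{\wbar}{\wbar^1} - 1\right) - \frac{\bar{d}(0)}{\sqrt{\text{Var}(\hat{D})}}\left(\frac{\wbar}{\wbar^0} - 1\right),
\]
each bracketed ratio is $O_p(1)$ because condition~(b) keeps $\text{Var}(\bar{d}(t))/\text{Var}(\hat{D})$ bounded, so the remainder vanishes in probability and Slutsky's theorem delivers Equation~\ref{eq:main_res}.

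The main obstacle I anticipate is the Slutsky bookkeeping rather than the CLT itself. In the one-ratio setting there is a single random multiplicative correction $\wbar^t/\wbar$, but here there are two, applied to $\bar{d}(1)$ and $\bar{d}(0)$ separately rather than jointly to their difference. Handling the resulting cross terms requires that each rescaled arm mean $\bar{d}(t)/\sqrt{\text{Var}(\hat{D})}$ remain tight, which is precisely where condition~(b) earns its keep by preventing the joint variance from collapsing relative to the individual arm variances.
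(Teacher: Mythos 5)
Your proposal is correct and follows essentially the same route as the paper: a CLT for $\hat{D}$ via Li and Ding, the identity $\bar{d}(t) = (\overline{\overline{c}}(t)-\overline{\overline{C}}(t))\wbar^t/\wbar$, convergence $\wbar^t/\wbar \xrightarrow[]{p} 1$ from Scott and Wu under condition (c), and a Slutsky argument showing the remainder terms (which are algebraically identical to the paper's $(1-\wbar^t/\wbar)(\overline{\overline{c}}(t)-\overline{\overline{C}}(t))/\sqrt{\text{Var}(\hat{D})}$ terms) vanish. The only cosmetic difference is that you establish tightness of $\bar{d}(t)/\sqrt{\text{Var}(\hat{D})}$ directly from the bounded variance ratio under condition (b), whereas the paper routes this through the single-ratio normality of Lemma~\ref{lem:main_res}; both rely on condition (b) in the same way.
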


\begin{proof}
First we derive a normality result for $\hat{D}$.
If we randomly select $pm$ clusters to assign to treatment, then from Theorem 3 in \citet{li2017general}, $\hat{D}$ has mean $0$ and variance
\[\text{Var}(\hat{D}) = \frac{S^2_D(1)}{m^1} + \frac{S^2_D(0)}{m^0} - \frac{S^2(D)}{m}.\]
If we have the condition of Equation~\ref{cond:two_ratio_1}, then under Theorem 4 of \citet{li2017general},
\[\frac{\hat{D} }{\sqrt{\text{Var}(\hat{D})}} \xrightarrow[]{d} \text{N}(0,1).\]
Now note that
\begin{align*}
\frac{\hat{D}}{\sqrt{\text{Var}(\hat{D})}} &= \frac{\bar{d}(1) -\bar{d}(0)}{\sqrt{\text{Var}(\hat{D})}}\\
&= \frac{\frac{1}{m^1}\sum_{j=1}^mT_j\left(w_jC_j(1)-w_j\overline{\overline{C}}(1)\right)/\wbar- \frac{1}{m^0}\sum_{j=1}^m(1-T_j)\left(w_jC_j(0)-w_j\overline{\overline{C}}(0)\right)/\wbar}{\sqrt{\text{Var}(\hat{D})}}\\
&= \frac{\frac{\wTbar}{\wbar}\left(\overline{\overline{c}}(1)-\overline{\overline{C}}(1)\right)-\frac{\wCbar}{\wbar}\left(\overline{\overline{c}}(0)-\overline{\overline{C}}(0)\right)}{\sqrt{\text{Var}(\hat{D})}}.
\end{align*}

We can now get back to our estimator of interest, $\overline{\overline{c}}(1)- \overline{\overline{c}}(0)$:
\begin{align*}
 &\frac{\left(\overline{\overline{c}}(1)-\overline{\overline{C}}(1)\right)-\left(\overline{\overline{c}}(0)-\overline{\overline{C}}(0)\right)}{\sqrt{\text{Var}(\hat{D})}}\\
 &= \frac{\left(1-\frac{\wTbar}{\wbar}\right)\left(\overline{\overline{c}}(1)-\overline{\overline{C}}(1)\right)-\left(1-\frac{\wCbar}{\wbar}\right)\left(\overline{\overline{c}}(0)-\overline{\overline{C}}(0)\right)}{\sqrt{\text{Var}(\hat{D})}}\\
 &\quad +  \frac{\frac{\wTbar}{\wbar}\left(\overline{\overline{c}}(1)-\overline{\overline{C}}(1)\right)-\frac{\wCbar}{\wbar}\left(\overline{\overline{c}}(0)-\overline{\overline{C}}(0)\right)}{\sqrt{\text{Var}(\hat{D})}}\\
  &= \left(1-\frac{\wTbar}{\wbar}\right)\frac{\left(\overline{\overline{c}}(1)-\overline{\overline{C}}(1)\right)}{\sqrt{\text{Var}(\hat{D})}} - \left(1-\frac{\wCbar}{\wbar}\right)\frac{\left(\overline{\overline{c}}(0)-\overline{\overline{C}}(0)\right)}{\sqrt{\text{Var}(\hat{D})}}\\
 &\quad + \frac{\hat{D}}{\sqrt{\text{Var}(\hat{D})}}.
\end{align*}

We know the asymptotic distribution of the third term, so we need to show that the first two terms vanish.
As stated in Lemma~\ref{lem:main_res}, if we have the conditions given by Equation~\ref{cond:two_ratio_1}
and Equation~\ref{eq:var_cond2}, then 
\begin{align*} 
\frac{\overline{\overline{c}}(t)-\overline{\overline{C}}(t)}{\sqrt{\frac{1-f^t}{m^t}}S_{D}(t)}  \xrightarrow[]{d} N(0,1).
\end{align*}
Recall from Section~\ref{subsec:asy_one_ratio_main} that the fourth condition gives us 
\[\frac{\wbar^t}{\wbar} \xrightarrow[]{p} 1.\]

To use these notes, we need to rewrite out previous result:
\begin{align*}
 \frac{\left(\overline{\overline{c}}(1)-\overline{\overline{C}}(1)\right)-\left(\overline{\overline{c}}(0)-\overline{\overline{C}}(0)\right)}{\sqrt{\text{Var}(\hat{D})}}=& \left(1-\frac{\wTbar}{\wbar}\right)\frac{\sqrt{\frac{1}{m^1}-\frac{1}{m}}S_{D}(1)}{\sqrt{\text{Var}(\hat{D})}}\frac{\left(\overline{\overline{c}}(1)-\overline{\overline{C}}(1)\right)}{\sqrt{\frac{1}{m^1}-\frac{1}{m}}S_{D}(1)}\\
 & - \left(1-\frac{\wCbar}{\wbar}\right)\frac{\sqrt{\frac{1}{m^0}-\frac{1}{m}}S_{D}(0)}{\sqrt{\text{Var}(\hat{D})}}\frac{\left(\overline{\overline{c}}(0)-\overline{\overline{C}}(0)\right)}{\sqrt{\frac{1}{m^0}-\frac{1}{m}}S_{D}(0)}\\
 &+ \frac{\hat{D}}{\sqrt{\text{Var}(\hat{D})}}.
 \end{align*}
The second condition puts limiting values on our asymptotic variances.
That is, the second condition (see Theorems 3 and 5 of \citet{li2017general}), gives us that for $t \in \{0,1\}$, $m\left(\frac{1-f^t}{m^t}\right)S^2_{D}(t)$ has some limiting value $\sigma^2_D(t)$ and $m\text{Var}(\hat{D})$ has some limiting value $V$.
This allows us to use Slutsky's theorem to get
\begin{align*}
\left(1-\frac{\wTbar}{\wbar}\right)\frac{\sqrt{\frac{1}{m^1}-\frac{1}{m}}S_{D}(t)/\wbar}{\sqrt{\text{Var}(\hat{D})}}\frac{\left(\overline{\overline{c}}(t)-\overline{\overline{C}}(t)\right)}{\sqrt{\frac{1-f^t}{m^t}}S_{D}(t)} \xrightarrow[]{p} 0.
\end{align*}
To break this step down more, note that the first term goes to 0 in probability, the second goes to a constant by the third condition, and the final term is asymptotically normal.

Finally, Slutsky's theorem gives the result in Equation~\ref{eq:main_res}.
\end{proof}


\subsection{Theorem 1 without covariates}\label{sec:theor_1_no_cov_1_blk}
In this section, we explore the impact estimator of a weighted least squares regression without covariate adjustment.
We prove Theorem 1 under this special case and also find the bias of our estimator.
We start by deriving the form of the estimator, then finding bias and consistency results.
In particular, the estimator has finite sample bias but is consistent.
We then show the asymptotic normality result given in Theorem 1.
Without covariates, the interactions in our regression imply that block means and effects are estimated independently, and so we can focus on the estimator for a single block, $b$.
In other words, for now we can act as if there is only one block.

\subsubsection{The estimator}
We assume that we have one block, arbitrarily block $b$, with $m_b$ clusters.
We assign $m_b^1 = p_bm_b$ clusters to treatment and the rest of the $m_b^0$ clusters to control.
Define shorthand $f_b^t = m_b^t/m_b$.
When we have multiple blocks, we have block indicators $S_{ijb}$ and $S_{jb}$ which equals 1 if unit $i$ or cluster $j$ belong to block $b$, and 0 otherwise.
Each unit $i$, of $n_{jb}$ units, in cluster $j$ of block $b$ has associated weight $w_{ijb}$ and we have $w_{jb} = \sum w_{ijb}$.
Denote $\wC_b=\sum_{j=1}^{m_b}(1-T_{jb})w_{jb}$, $\wT_b=\sum_{j=1}^{m_b}T_{jb}w_{jb}$, and $w_b=\sum_{j=1}^{m_b}w_{jb}$.
Each unit $i$ in cluster $j$ has potential outcomes $Y_{ijb}(1)$ under treatment and $Y_{ijb}(0)$ under control, with treatment assigned at the cluster level.
$y_{ijb}$ is the observed outcome for unit $i$ in cluster $j$ of block $b$ and $\bar{y}_{jb}$ is the observed weighted average outcome for all units in cluster $j$ of block $b$.
We have by definition that $y_{ijb} = T_{jb}Y_{ijb}(1) + (1-T_{jb})Y_{ijb}(0)$ and for $t \in \{0,1\}$,
\[\bar{Y}_{jb}(t) = \frac{1}{w_{jb}}\sum_{i: T_{jb}=t}w_{ijb}Y_{ijb}(t) \]
 and $\bar{y}_{jb} =T_{jb}\bar{Y}_{jb}(1) + (1-T_{jb})\bar{Y}_{jb}(0)$.
We also define for $t \in \{0,1\}$,
\begin{align*}
&\barbary_b(t) = \frac{1}{\sum_{j:T_{jb} = t}w_{jb}}\sum_{j:T_{jb} = t}w_{jb}\bar{y}_{jb}.
\end{align*}
We have finite population parameters
\[\barbarY_b(t)= \frac{1}{\sum_{j=1}^{m_b}w_{jb}}\sum_{j=1}^{m_b}w_{jb}\bar{Y}_{jb}(t).\]

Throughout, we will treat $\beta_{1,b}$ as the \textit{finite population} parameter of the weighted average of cluster impacts, 
\[\beta_{1,b} = \frac{\sum_{j=1}^{m_b}w_{jb}(\bar{Y}_{jb}(1)-\bar{Y}_{jb}(0))}{\sum_{j=1}^{m_b}w_{jb}} = \barbarY_b(1) - \barbarY_b(0).\]

We start by finding the exact form of the regression estimator $\hat{\beta}_1$.
For our regression, we have $\bm{z}_{ijb} = (S_{ijb}\tilde{T}_{jb}, S_{ijb} ) = (\tilde{T}_{jb}, 1)$ because we only have one block.
Here we define
\[p_b^* =  \frac{1}{w_b}\sum_{j=1}^{m_b}T_{jb}w_{jb} = \frac{\wT_b}{w_b}\]
and
\[\tilde{T}_{jb} = T_{jb} - p_b^*,\]
so that
\[\sum_{j=1}^{m_b}w_{j_b}\tilde{T}_{jb} = \sum_{j=1}^{m_b}w_{j_b}\left(T_{jb} - p_b^*\right)=0 .\]
The estimated parameter vector from weighted least squares regressing the individual $y_{ijb}$ on an intercept and centered treatment indicator $\tilde{T}_{jb}$ with weights $w_{ijb}$ is
\begin{align*}
\begin{pmatrix}
\hat{\beta}_{1,b}\\
\hat{\beta}_{0,b}
\end{pmatrix} = 
\left[\left(\sum_{j=1}^{m_b}\sum_{i=1}^{n_{jb}}w_{ijb}\bm{z}'_{ijb}\bm{z}_{ijb}\right)^{-1}\sum_{j=1}^{m_b}\sum_{i=1}^{n_{jb}}w_{ijb}\bm{z}'_{ijb}y_{ijb}\right].
\end{align*}

\noindent\textbf{Result:} $\hat{\beta}_{1,b} =\barbary_b(1) - \barbary_b(0)$.

We have by simple algebra simplifications and matrix inversion,
\begin{align*}
\begin{pmatrix}
\hat{\beta}_{1,b}\\
\hat{\beta}_{0,b}
\end{pmatrix} &=
\begin{pmatrix}
\sum_{j=1}^{m_b}w_{jb}\tilde{T}_{jb}^2 &  \sum_{j=1}^{m_b}w_{jb}\tilde{T}_{jb}\\
\sum_{j=1}^{m_b}w_{jb}\tilde{T}_{jb}  & \sum_{j=1}^{m_b}w_{jb}
\end{pmatrix}^{-1}
\begin{pmatrix}
\sum_{j=1}^{m_b}w_{jb}\tilde{T}_{jb}\bar{y}_{jb}\\
\sum_{j=1}^{m_b}w_{jb}\bar{y}_{jb}
\end{pmatrix}\\
&=
\begin{pmatrix}
\sum_{j=1}^{m_b}w_{jb}\tilde{T}_{jb}^2 &  0\\
0  & w_b
\end{pmatrix}^{-1}
\begin{pmatrix}
\sum_{j=1}^{m_b}w_{jb}\tilde{T}_{jb}\bar{y}_{jb}\\
\wT_b\barbary_b(1) + \wC_b\barbary_b(0)
\end{pmatrix}\\
&=
\begin{pmatrix}
\frac{1}{\sum_{j=1}^{m_b}w_{jb}\tilde{T}_{jb}^2} &  0\\
0  & \frac{1}{w_b}
\end{pmatrix}
\begin{pmatrix}
\sum_{j=1}^{m_b}w_{jb}\tilde{T}_{jb}\bar{y}_{jb}\\
\wT_b\barbary_b(1) + \wC_b\barbary_b(0)
\end{pmatrix}
\end{align*}
To simplify this final form, note a few useful algebra results:
\begin{align*}
\sum_{j=1}^{m_b}w_{jb}\tilde{T}_{jb}^2 &= \sum_{j=1}^{m_b}w_{jb}\left(T_{jb}^2-2T_{jb}p_b^*+(p_b^*)^2\right)\\
&=\wT_b-\frac{(\wT_b)^2}{w_b}\\
&=\frac{\wT_b\wC_b}{w_b}
\end{align*}
and
\begin{align*}
\sum_{j=1}^{m_b}w_{jb}\tilde{T}_{jb}\bar{y}_{jb} &= \sum_{j=1}^{m_b}w_{jb}(T_{jb}-p_b^*)\bar{y}_{jb}\\
&= \wT_b\barbary_b(1) - \frac{\wT_b}{w_b}(\wT_b\barbary_b(1) + \wC_b\barbary_b(0))\\
&= \frac{\wT_b\wC_b}{w_b}\left(\barbary_b(1) - \barbary_b(0)\right).
\end{align*}

We then have the following for our regression estimator for the treatment impact (and control mean):
\begin{align*}
\begin{pmatrix}
\hat{\beta}_{1,b}\\
\hat{\beta}_{0,b}
\end{pmatrix} 
&= \begin{pmatrix}
\frac{w_b}{\wT_b\wC_b} &  0\\
0  & \frac{1}{w_b}
\end{pmatrix}
\begin{pmatrix}
\frac{\wT_b\wC_b}{w_b}\left(\barbary_b(1) - \barbary_b(0)\right)\\
\wT_b\barbary_b(1) + \wC_b\barbary_b(0)
\end{pmatrix}\\
&=\begin{pmatrix}
\barbary_b(1) - \barbary_b(0)\\
\frac{\wT_b}{w_b}\barbary_b(1) + \frac{\wC_b}{w_b}\barbary_b(0)
\end{pmatrix}.
\end{align*}

\subsubsection{Finite sample bias}\label{subsec:bias}
Here we explore the finite sample bias of the regression estimator.
Using Theorem 1 in \cite{middleton2008bias} or Equation 7 in \citet{middleton2015unbiased}, originally from \citet{hartley1954unbiased}, the bias for a ratio estimator is
\begin{align*}
E\left[\frac{u}{v}\right] = \frac{1}{E[v]}\left[E[u]-\text{Cov}\left(\frac{u}{v}, v\right)\right],
\end{align*}
assuming that $v>0$.

We then have, assuming that $\wT>0$ and $\wC>0$,
\begin{align*}
E[\barbary_b(1)]&=E\left[\frac{\sum_{j=1}^{m_b}T_{jb}w_{jb}\bar{y}_{bj}}{\sum_{j=1}^{m_b}w_{jb}T_{jb}}\right]\\
& = \frac{1}{\frac{p_b}{m^1_b}\sum_{j=1}^{m_b}w_{jb}}\left[\frac{p_b}{m^1_b}\sum_{j=1}^{m_b}w_{jb}\bar{Y}_{jb}(1)-\text{Cov}\left(\frac{\sum_{j=1}^{m_b}T_{jb}w_{jb}\bar{y}_{jb}}{\sum_{j=1}^{m_b}w_{jb}T_{jb}}, \frac{1}{m^1_b}\sum_{j=1}^{m_b}T_{jb}w_{jb}\right)\right]\\
& = \barbarY_b(1)-\frac{m_b}{w_b}\text{Cov}\left(\barbary_b(1), \frac{\wT_b}{m^1_b}\right).
\end{align*}

It is then straightforward to find the bias of our estimator as follows:
\begin{align*}
E[\hat{\beta}_{1,b} ] - \beta_{1,b}&= -\frac{m_b}{w_b}\text{Cov}\left(\barbary_b(1), \frac{\wT_b}{m_b^1}\right) + \frac{m_b}{w_b}\text{Cov}\left(\barbary_b(0), \frac{\wC_b}{m_b^0}\right).
\end{align*}
This means, as pointed out by \citet{middleton2015unbiased}, that the bias will depend on (1) the covariance between cluster size (as captured by the $w_{jb}$) and outcomes and (2) the variability of cluster sizes.

\subsubsection{Consistency}\label{subsec:cons_no_cov}

Here we give conditions for consistency of the regression estimator.
It is useful to assume limiting values on the finite population parameters.
This can be handled in a few ways.
Following \citet{middleton2015unbiased}, we may envision a sequence of $h$ finite populations such that as $h \to \infty$, the finite population increases by copying the original $m_b$ clusters in block $b$ $h$ times and then randomization occurs independently within each copy with probability $p_b$.
This keeps our estimands and parameters constant as the population grows.

A bit more generally, we may have limiting values for the finite population parameters.
For a given finite population with $m$ clusters, define the following quantities:
\[\overline{wY(1)}_b = \frac{1}{m_b}\sum_{j=1}^{m_b}w_{jb}\bar{Y}_{jb}(1),\]
\[\overline{wY(0)}_b = \frac{1}{m_b}\sum_{j=1}^{m_b}w_{jb}\bar{Y}_{jb}(0),\] 
and
\[\wbar_b = \frac{1}{m_b}\sum_{j=1}^{m_b}w_{jb}.\]
Given these definitions, our $\beta_{1,b}$ is then
\begin{align*}
\beta_{1,b} & = \frac{\overline{wY(1)}_b}{\wbar_b} - \frac{\overline{wY(0)}_b}{\wbar_b}.
\end{align*}
Assume limiting values  $\mu^*_{b}(1)$, $\mu^*_{b}(0)$, and $\omega_b$ such that as $m_b \to \infty$,
\begin{align*}
\overline{wY(1)}_b \xrightarrow[]{p} \mu^*_{b}(1),\\
\overline{wY(0)}_b \xrightarrow[]{p} \mu^*_{b}(0),
\end{align*}
and
\begin{align*}
\wbar_b &\xrightarrow[]{p} \omega_b >0.
\end{align*}
If we were normalizing the weights across the whole population, it could be the case that $\omega_b \to 0$; we eliminate this issue by using unnormalized weights (which don't change the estimand or estimator because we use weights in the numerator and denominator).
This gives a limiting value for $\beta_{1,b}$ of $\beta_{1,b}^* = (\mu^*_b(1) -\mu^*_b(0))/\omega_b$.

We next show that $\hat{\beta}_{1,b} \xrightarrow[]{p} \beta_{1,b}^*$.
Denote
\[S^2_{w,b}(1) = \frac{1}{m_b-1}\sum_{j=1}^{m_b}\left(w_{jb}\bar{Y}_{jb}(1)-\overline{wY(1)}_b\right)^2,\]
\[S^2_{w,b}(0) = \frac{1}{m_b-1}\sum_{j=1}^{m_b}\left(w_j\bar{Y}_{jb}(0)-\overline{wY(0)}_b\right)^2,\]
and
\[S^2(w_b) = \frac{1}{m_b-1}\sum_{j=1}^{m_b}\left(w_{jb}-\wbar_b\right)^2.\]
Assume that $m_b^1/m_b \to p_b \in (0,1)$.
Using Theorem B from \citet{scott1981asymptotic}, under simple random sampling of clusters into treatment, if our weighted variances do not go to infinity as we increase sample size, specifically if as $m_b \to \infty$,
\begin{align*}
S^2_{w,b}(1)/m_b \to 0,\\
S^2_{w,b}(0)/m_b \to 0,
\end{align*}
and
\begin{align*}
S^2(w_b)/m_b \to 0
\end{align*}
then
\[\frac{1}{m_b^1}\sum_{j=1}^{m_b}w_{jb}T_{jb}\bar{Y}_{jb}(1) - \overline{wY(1)}_b\xrightarrow[]{p} 0,\]
\[\frac{1}{m_b^0}\sum_{j=1}^{m_b}w_{jb}(1-T_{jb})\bar{Y}_{jb}(0) - \overline{wY(0)}_b \xrightarrow[]{p} 0,\]
\[\frac{1}{m_b^1}\sum_{j=1}^{m_b}w_{jb}T_{jb} - \wbar_b \xrightarrow[]{p} 0,\]
and
\begin{align*}
\frac{1}{m_b^0}\sum_{j=1}^{m_b}w_{jb}(1-T_{jb}) - \wbar_b \xrightarrow[]{p} 0
\end{align*}
as $m_b \to \infty$.


We next need a small convergence lemma:
\begin{lemma}\label{lemma:simple_conv_res}
If $A_m -B_m \xrightarrow[]{p} 0$ and $B_m \xrightarrow[]{p} k$, with $k$ a constant, as $m \to \infty$ then $A_m \xrightarrow[]{p} k$ as $m \to \infty$.
\end{lemma}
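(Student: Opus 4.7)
The plan is to prove Lemma~\ref{lemma:simple_conv_res} by a direct epsilon argument using the triangle inequality, which is the standard route for this kind of ``sum of a vanishing term and a convergent term'' result. Alternatively one could invoke Slutsky's theorem (since $k$ is a constant, convergence in probability and in distribution coincide), but since the statement is elementary enough I would give the self-contained argument rather than cite a heavier tool.

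First I would write $A_m - k = (A_m - B_m) + (B_m - k)$ and apply the triangle inequality pointwise to obtain $|A_m - k| \le |A_m - B_m| + |B_m - k|$. Then for any fixed $\epsilon > 0$, I would use the elementary fact that if $|A_m - k| > \epsilon$ then at least one of $|A_m - B_m| > \epsilon/2$ or $|B_m - k| > \epsilon/2$ must hold, which yields the union bound
\begin{equation*}
P(|A_m - k| > \epsilon) \le P\!\left(|A_m - B_m| > \epsilon/2\right) + P\!\left(|B_m - k| > \epsilon/2\right).
\end{equation*}
By the hypothesis $A_m - B_m \xrightarrow{p} 0$ the first term tends to $0$ as $m \to \infty$, and by $B_m \xrightarrow{p} k$ the second term also tends to $0$. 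Since $\epsilon$ was arbitrary, this establishes $A_m \xrightarrow{p} k$.

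There is essentially no obstacle here: the only subtlety is making sure that $k$ is a constant (not a random variable), which is exactly what allows $B_m \xrightarrow{p} k$ to be used directly without worrying about joint convergence of $(A_m - B_m, B_m)$. If $k$ were itself random, one would need continuous mapping or joint-convergence arguments, but for constant $k$ the union bound above suffices. I would close by remarking that this is just the standard ``addition preserves convergence in probability when one limit is a constant'' fact, included explicitly because it is invoked in the consistency arguments of Section~\ref{subsec:cons_no_cov}.
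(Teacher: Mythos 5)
Your proof is correct and uses exactly the same decomposition $A_m - k = (A_m - B_m) + (B_m - k)$ as the paper; the paper simply states this identity and asserts the conclusion, while you fill in the standard triangle-inequality and union-bound details. No substantive difference in approach.
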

\begin{proof}
\[A_m - k = A_m -B_m + B_m -k \xrightarrow[]{p} 0.\]
\end{proof}


Using Lemma~\ref{lemma:simple_conv_res}, we have
\[\frac{1}{m_b^t}\sum_{j: T_{jb} = t}w_{jb}\bar{Y}_{jb}(t) = \frac{w_b^t}{m_b^t}\barbary_b(t)  \xrightarrow[]{p} \mu^*_b(t)\]
and
\[\frac{1}{m_b^t}\sum_{j: T_{jb} = t}w_{jb} = \frac{w_b^t}{m_b^t}  \xrightarrow[]{p} \omega_b.\]


Then by Slutsky's theorem,
\begin{align*}
\hat{\beta}_{1,b} = \barbary_b(1) - \barbary_b(0) \xrightarrow[]{p} \frac{\mu^*_b(1)}{\omega_b} - \frac{\mu^*_b(0)}{\omega_b} = \beta_{1,b}^*.
\end{align*}

Hence in this setting our estimator is consistent.

\subsubsection{Asymptotic normality of one ratio}\label{subsec:asy_one_ratio}

We are interested in the asymptotic behavior of, for $t \in \{0, 1\}$,
\[\barbary_b(t)= \frac{1}{\sum_{j=1}^{m_b}w_{jb}T_{jb}}\sum_{j=1}^{m_b}T_{jb}w_{jb}\bar{Y}_{jb}(t).\]
We use the notation from Section~\ref{subsec:asy_one_ratio_main} but now add subscripts $b$ to indicate that we are referring to block $b$.
For $t \in \{1, 0\}$, let $C_{jb}(t) = \bar{Y}_{jb}(t)$, so that $\overline{\overline{c}}_b(t)=\barbary_b(t)$ and $\overline{\overline{C}}_b(t) = \barbarY_b(t)$.

\begin{corollary}\label{cor:one_ratio}
Under the conditions of Lemma~\ref{lem:main_res}, we have 
\begin{align*}
\frac{\overline{\overline{c}}_b(t) - \overline{\overline{C}}_b(t)}{\sqrt{\frac{1-f_b^t}{m_b^t}}S_{R,b}(t)/\wbar}
&= \frac{(\barbary_b(t)-\barbarY_b(t))}{\sqrt{\frac{1-f_b^t}{m_b^t}}\sqrt{\frac{1}{m_b-1}\sum_{j=1}^{m_b}(R_{jb}(t)/\wbar_b)^2}}\\
& \xrightarrow[]{d} N(0,1).
\end{align*}
\end{corollary}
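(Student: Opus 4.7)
The plan is to obtain Corollary~\ref{cor:one_ratio} as a direct specialization of Lemma~\ref{lem:main_res} to block~$b$. The first step is to instantiate the generic cluster-level potential outcomes of Lemma~\ref{lem:main_res} as $C_{jb}(t) = \bar{Y}_{jb}(t)$ and take the generic cluster weights to be $w_{jb}$, with the sequence of finite populations indexed by $m_b \to \infty$ within block~$b$. Under this substitution, comparing definitions gives $\overline{\overline{c}}_b(t) = \barbary_b(t)$ and $\overline{\overline{C}}_b(t) = \barbarY_b(t)$, so the numerator of the standardized quantity in the corollary agrees with the numerator in the lemma's conclusion.

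The second step is to unpack the denominator. The scaled deviations of Lemma~\ref{lem:main_res}, specialized to block~$b$, become $D_{jb}(t) = (w_{jb}\bar{Y}_{jb}(t) - w_{jb}\barbarY_b(t))/\wbar_b$, so $S_{D,b}^2(t) = \frac{1}{m_b-1}\sum_{j=1}^{m_b} D_{jb}^2(t)$. Writing $R_{jb}(t) = w_{jb}(\bar{Y}_{jb}(t) - \barbarY_b(t))$ so that $D_{jb}(t) = R_{jb}(t)/\wbar_b$, the lemma's denominator $\sqrt{(1-f_b^t)/m_b^t}\,S_{D,b}(t)$ equals $\sqrt{(1-f_b^t)/m_b^t}\sqrt{\frac{1}{m_b-1}\sum_{j=1}^{m_b}(R_{jb}(t)/\wbar_b)^2}$, matching the right-hand expression inside the corollary. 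The two displayed fractions in the corollary are therefore identical by definition.

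Since the corollary assumes the hypotheses of Lemma~\ref{lem:main_res}, the convergence to $N(0,1)$ follows immediately from that lemma. The main obstacle is purely notational: aligning the block-subscripted quantities $\wbar_b$, $f_b^t$, and $m_b^t$ with the generic $\wbar$, $f^t$, and $m^t$ of Lemma~\ref{lem:main_res}, and confirming that the regularity conditions in Equation~\ref{eq:var_cond_g} and Equation~\ref{eq:var_cond} are read at the block level. No new probabilistic argument is needed, since both the finite-population CLT from \citet{li2017general} and the Slutsky step sending $\wbar_b^t/\wbar_b \to 1$ have already been carried out inside the proof of Lemma~\ref{lem:main_res}.
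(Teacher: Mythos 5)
Your proposal is correct and matches the paper's own treatment: the paper likewise obtains Corollary~\ref{cor:one_ratio} by instantiating Lemma~\ref{lem:main_res} with $C_{jb}(t) = \bar{Y}_{jb}(t)$ and block-level weights $w_{jb}$, so that $\overline{\overline{c}}_b(t)=\barbary_b(t)$, $\overline{\overline{C}}_b(t)=\barbarY_b(t)$, and $D_{jb}(t) = R_{jb}(t)/\wbar_b$, with the denominator identity checked by the same expansion you give. No gaps; this is the intended one-line specialization.
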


The term in the denominator expands as follows:
\begin{align*}
\frac{1}{m_b-1}\sum_{j=1}^{m_b}(R_{jb}(t)/\wbar_b)^2 & = \frac{1}{m_b-1}\sum_{j=1}^{m_b}\left(\left(w_{jb}C_{jb}(t) - w_{jb}\overline{\overline{C}}_b(t)\right)/\wbar_b\right)^2\\
& = \frac{1}{(m_b-1)}\sum_{j=1}^{m_b}\frac{w_{jb}^2}{\wbar_b^2}\left(\bar{Y}_{jb}(t) -\barbarY_b(t)\right)^2.
\end{align*}

\subsubsection{Asymptotic normality}\label{subsec:asy_two_ratios}
We are interested in the ratio difference estimator, $\barbary_b(1) - \barbary_b(0)$.

Let $D_{jb}(t)=  (w_{jb}C_{jb}(t)-w_{jb}\overline{\overline{C}}_b(t))/\wbar_b = w_{jb}(\bar{Y}_{jb}(t)-\barbarY_b(t))/\wbar_b $, with $C_{jb}(t)$ defined as in Section~\ref{subsec:asy_one_ratio}.
Note that this means that we still have $\overline{\overline{c}}_b(t) =\barbary_b(t)$ and $\overline{\overline{C}}_b(t) = \barbarY_b(t)$.
Now we have reformulated our complicated difference in ratio estimators into a form that's easier to work with.

\begin{corollary}\label{cor:two_ratio}
Under the conditions of Lemma~\ref{lemma:two_ratio_main}, we have
\begin{align*}
\frac{\left(\overline{\overline{c}}_b(1) - \overline{\overline{C}}_b(1)\right) - \left(\overline{\overline{c}}_b(0)-\overline{\overline{C}}_b(0)\right)}{\sqrt{\text{Var}(\hat{D}_b)}} &= \frac{\left(\barbary_b(1) - \barbarY_b(1)\right) - \left(\barbary_b(0)-\barbarY_b(0)\right)}{\sqrt{\text{Var}(\hat{D}_b)}}\\
&  \xrightarrow[]{d} N(0, 1).
\end{align*}
\end{corollary}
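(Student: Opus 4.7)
The plan is to invoke Lemma~\ref{lemma:two_ratio_main} directly in the block-$b$ setting by taking $C_{jb}(t) = \bar{Y}_{jb}(t)$ and treating the $m_b$ clusters in block $b$ as the finite population to which the general ratio machinery applies. Under this identification the generic ratio estimator $\overline{\overline{c}}(t)$ specializes to $\barbary_b(t)$, the finite-population target $\overline{\overline{C}}(t)$ specializes to $\barbarY_b(t)$, and the scaled deviation $D_j(t)$ becomes $D_{jb}(t) = w_{jb}(\bar{Y}_{jb}(t)-\barbarY_b(t))/\wbar_b$, which matches the definition given just above the corollary. So Corollary~\ref{cor:two_ratio} is essentially a relabeling of Lemma~\ref{lemma:two_ratio_main}.

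First I would verify that $\bar{D}_b(1)=\bar{D}_b(0)=0$, which is immediate from the telescoping identity $\sum_j w_{jb}\barbarY_b(t)=\sum_j w_{jb}\bar{Y}_{jb}(t)$ used in Section~\ref{subsec:asy_one_ratio_main}. Next I would define the block-$b$ analogs
\[
S^2_{D,b}(t)=\frac{1}{m_b-1}\sum_{j=1}^{m_b}D_{jb}^2(t), \qquad S^2(D_b)=\frac{1}{m_b-1}\sum_{j=1}^{m_b}\bigl(D_{jb}(1)-D_{jb}(0)\bigr)^2,
\]
and record that $\hat{D}_b=\bar{d}_b(1)-\bar{d}_b(0)$ has variance
\[
\text{Var}(\hat{D}_b) = \frac{S^2_{D,b}(1)}{m_b^1} + \frac{S^2_{D,b}(0)}{m_b^0} - \frac{S^2(D_b)}{m_b},
\]
by Theorem~3 of \citet{li2017general} applied within the block. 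Since the hypotheses of Lemma~\ref{lemma:two_ratio_main} are assumed to hold for the block-$b$ objects, Equation~\ref{eq:main_res} yields the claimed convergence.

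The only thing to check, and really the main piece of bookkeeping, is that the studentized expression in the corollary coincides with the studentized expression in the lemma. This is true because $\overline{\overline{c}}_b(t)-\overline{\overline{C}}_b(t)=\barbary_b(t)-\barbarY_b(t)$ by the identifications above, and the denominator $\sqrt{\text{Var}(\hat{D}_b)}$ is literally the block-$b$ version of $\sqrt{\text{Var}(\hat{D})}$. With that identification in hand, Lemma~\ref{lemma:two_ratio_main} delivers the result with no further work; I do not anticipate a substantive obstacle, only the notational care needed to track the subscript $b$ consistently through the quantities borrowed from Section~\ref{subsec:asy_two_ratios_main}.
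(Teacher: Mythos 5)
Your proposal is correct and matches the paper's approach exactly: the corollary is obtained by specializing Lemma~\ref{lemma:two_ratio_main} to block $b$ with $C_{jb}(t)=\bar{Y}_{jb}(t)$, under which $\overline{\overline{c}}_b(t)=\barbary_b(t)$, $\overline{\overline{C}}_b(t)=\barbarY_b(t)$, and $D_{jb}(t)=w_{jb}(\bar{Y}_{jb}(t)-\barbarY_b(t))/\wbar_b$, so the studentized quantities coincide and the lemma delivers the limit directly. The bookkeeping you flag (zero means of the $D_{jb}(t)$, the block-level variance formula) is exactly the content of the paper's surrounding discussion.
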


The denominator simplifies as follows:
\begin{align*}
\text{Var}(\hat{D}_b) =&  \frac{S^2_{D,b}(1)}{m_b^1} + \frac{S^2_{D,b}(0)}{m_b^0} - \frac{S^2(D_b)}{m_b}\\
=&  \frac{\frac{1}{m_b-1}\sum_{j=1}^{m_b}D_{jb}^2(1)}{m_b^1} + \frac{\frac{1}{m_b-1}\sum_{j=1}^{m_b}D_{jb}^2(0)}{m_b^0} - \frac{\frac{1}{m_b-1}\sum_{j=1}^{m_b}(D_{jb}(1) - D_{jb}(0))^2}{m_b}\\
=&  \frac{\frac{1}{m_b-1}\sum_{j=1}^{m_b}\left(w_{jb}(\bar{Y}_{jb}(1)-\barbarY_b(1))/\wbar_b \right)^2}{m_b^1} + \frac{\frac{1}{m_b-1}\sum_{j=1}^{m_b}\left(w_{jb}(\bar{Y}_{jb}(0)-\barbarY_b(0))/\wbar_b \right)^2}{m_b^0}\\
& - \frac{\frac{1}{m_b-1}\sum_{j=1}^{m_b}\left(w_{jb}(\bar{Y}_{jb}(1)-\barbarY_b(1))/\wbar_b  - w_{jb}(\bar{Y}_{jb}(0)-\barbarY_b(0))/\wbar_b \right)^2}{m_b}\\
=&  \frac{\frac{1}{m_b-1}\sum_{j=1}^{m_b}\frac{w_{jb}^2}{\wbar_b^2}\left(\bar{Y}_{jb}(1)-\barbarY_b(1)\right)^2}{m_b^1} + \frac{\frac{1}{m_b-1}\sum_{j=1}^{m_b}\frac{w_{jb}^2}{\wbar_b^2}\left(\bar{Y}_{jb}(0)-\barbarY_b(0)) \right)^2}{m_b^0}\\
& - \frac{\frac{1}{m_b-1}\sum_{j=1}^{m_b}\frac{w_{jb}^2}{\wbar_b^2}\left(\bar{Y}_{jb}(1)-\bar{Y}_{jb}(0)  - (\barbarY_b(1)-\barbarY_b(0)) \right)^2}{m_b}.
\end{align*}

\subsection{Theorem 1 with covariates, multiple blocks}
In this section we explore the impact estimator from a weighted least squares regression with additional covariate adjustment.
We start by deriving the closed form of this estimator, which is $\hat{\beta}_{1,b} =
\barbary_b(1) - \barbary_b(0) -\left(\xbarbar^1_b - \xbarbar^0_b\right)\hat{\bm{\gamma}}$ for block $b$.
We then find consistency results and show the asymptotic normality result given in Theorem 1.
We do not specifically include interactions between the additional covariates and the block indicators as this immediately follows by generating the fully interacted set of covariates and then using our results on that extended set.

\subsubsection{The estimator}\label{subsec:cov_no_int_est}
We now have multiple blocks with covariate adjustment but not interactions between blocks and additional covariates in our regression.
However, we continue to include an interaction between block and treatment.
We assume that $m_b/m \xrightarrow[]{p} q_b$ where $0 < q_b < 1$.
Let there be $h$ blocks.
Accordingly, $\hat{\bm{\beta}}_1$ and $\hat{\bm{\beta}}_0$ are vectors with an entry for each block.
Thus we have $\hat{\beta}_{1,b}$, the $b$th entry of $\hat{\bm{\beta}}_1$, is still the treatment effect estimator for block $b$.

Now let $\tilde{\bm{x}}_{ijb} = \bm{x}_{ijb} - \xbarbar_b$ with 
\[\xbarbar_b= \frac{1}{w_b}\sum_{j=1}^{m_b}\sum_{i=1}^{n_{jb}}w_{ijb}\bm{x}_{ijb}\]
and $\tilde{\overline{\bm{x}}}_{jb} = \overline{\bm{x}}_{jb} - \xbarbar_b$ with
\[\overline{\bm{x}}_{jb}=\frac{1}{w_{jb}}\sum_{i=1}^{n_{jb}}w_{ijb}\bm{x}_{ijb}.\]

We also define
\[\xbarT_b = \frac{\sum_{j=1}^{m_b}w_{jb}T_{jb}\overline{\bm{x}}_{jb}}{\sum_{j=1}^{m_b}w_{jb}T_{jb}}\]
and
\[\xbarC_b = \frac{\sum_{j=1}^{m_b}w_{jb}(1-T_{jb})\overline{\bm{x}}_{jb}}{\sum_{j=1}^{m_b}w_{jb}(1-T_{jb})}.\]

We have $\bm{z}_{ijb} = (S_{ij1}\tilde{T}_{j1}, \dots, S_{ijh}\tilde{T}_{jh}, S_{ij1}, \dots,  S_{ijh}, \tilde{\bm{x}}_{ijb})$.
The estimated parameter vector is
\begin{align*}
\begin{pmatrix}
\hat{\bm{\beta}}_1\\
\hat{\bm{\beta}}_0
\end{pmatrix} = 
\left[\left(\sum_{b=1}^h\sum_{j=1}^{m_b}\sum_{i=1}^{n_{jb}}w_{ijb}\bm{z}'_{ijb}\bm{z}_{ijb}\right)^{-1}\sum_{b=1}^h\sum_{j=1}^{m_b}\sum_{i=1}^{n_{jb}}w_{ijb}\bm{z}'_{ijb}y_{ijb}\right].
\end{align*}

\textbf{Result:}
\[\hat{\beta}_{1,b} =
\barbary_b(1) - \barbary_b(0) -\left(\xbarbar^1_b - \xbarbar^0_b\right)\hat{\bm{\gamma}}\]

\begin{remark}
When we have interactions between blocks and additional covariates, we will have a different $\hat{\bm{\gamma}}$ for each block and the estimators for each block will be independent.
In that case, $\hat{\beta}_{1,b}$ is the same as the estimator as if we had only run the regression with block $b$, i.e. as if we only had one block.
Therefore, these results directly extend to that case.
\end{remark}

We have
\begin{align*}
&\begin{pmatrix}
\hat{\bm{\beta}}_1\\
\hat{\bm{\beta}}_0\\
\hat{\bm{\gamma}}\\
\end{pmatrix}\\ 
&=
\begin{pmatrix}
\sum_{j=1}^{m_1}w_{j1}\tilde{T}_{j1}^2 & \cdots& 0 & 0 &\cdots& 0 & \sum_{j=1}^{m_1}w_{j1}\tilde{T}_{j1}\tilde{\overline{\bm{x}}}_{j1}\\
\vdots  & \ddots & \vdots & \vdots & \ddots & \vdots & \vdots \\
0 &\cdots & \sum_{j=1}^{m_h}w_{jh}\tilde{T}_{jh}^2 & 0 & \cdots & 0 & \sum_{j=1}^{m_h}w_{jh}\tilde{T}_{jh}\tilde{\overline{\bm{x}}}_{jh}\\
0  & \cdots & 0 & w_1 & \cdots& 0 & \bm{0}_v\\
\vdots & \ddots & \vdots & \vdots & \ddots & \vdots & \bm{0}_v\\
0  & \cdots & 0 & 0 & \cdots & w_h & \bm{0}_v\\
\sum_{j=1}^{m_1}w_{j1}\tilde{T}_{j1}\tilde{\overline{\bm{x}}}_{j1}' & \cdots & \sum_{j=1}^{m_h}w_{jh}\tilde{T}_{jh}\tilde{\overline{\bm{x}}}_{jh}'  & \bm{0}_v' & \cdots & \bm{0}_v' & \sum_{b=1}^h\sum_{j=1}^{m_b}\sum_{i=1}^{n_{jb}}w_{ijb}\tilde{\bm{x}}_{ijb}'\tilde{\bm{x}}_{ijb}
\end{pmatrix}^{-1}\\
& \quad \times\begin{pmatrix}
\sum_{j=1}^{m_1}w_{j1}\tilde{T}_{j1}\bar{y}_{j1}\\
\vdots\\
\sum_{j=1}^{m_h}w_{jh}\tilde{T}_{jh}\bar{y}_{jh}\\
\sum_{j=1}^{m_1}w_{j1}\bar{y}_{j1}\\
\vdots\\
\sum_{j=1}^{m_h}w_{jh}\bar{y}_{jh}\\
\sum_{b=1}^h\sum_{j=1}^{m_b}\sum_{i=1}^{n_{jb}}w_{ijb}\tilde{\bm{x}}_{ijb}'y_{ijb}
\end{pmatrix}.
\end{align*}

We start by performing the matrix inversion.
We can break this matrix into the following blocks:
\begin{align*}
\bm{A} &= 
\begin{pmatrix}
\sum_{j=1}^{m_1}w_{j1}\tilde{T}_{j1}^2 & \cdots& 0 & 0 &\cdots& 0 \\
\vdots  & \ddots & \vdots & \vdots & \ddots & \vdots\\
0 &\cdots & \sum_{j=1}^{m_h}w_{jh}\tilde{T}_{jh}^2 & 0 & \cdots & 0\\
0  & \cdots & 0 & w_1 & \cdots& 0\\
\vdots & \ddots & \vdots & \vdots & \ddots & \vdots\\
0  & \cdots & 0 & 0 & \cdots & w_h 
\end{pmatrix}\\
&=\begin{pmatrix}
\frac{w^1_1w^0_1}{w_1} & \cdots& 0 & 0 &\cdots& 0 \\
\vdots  & \ddots & \vdots & \vdots & \ddots & \vdots\\
0 &\cdots & \frac{w^1_hw^0_h}{w_h} & 0 & \cdots & 0\\
0  & \cdots & 0 & w_1 & \cdots& 0\\
\vdots & \ddots & \vdots & \vdots & \ddots & \vdots\\
0  & \cdots & 0 & 0 & \cdots & w_h 
\end{pmatrix}
\end{align*}
\begin{align*}
\bm{B} &=
\begin{pmatrix}
\sum_{j=1}^{m_1}w_{j1}\tilde{T}_{j1}\tilde{\overline{\bm{x}}}_{j1}\\
\vdots\\
\sum_{j=1}^{m_h}w_{jh}\tilde{T}_{jh}\tilde{\overline{\bm{x}}}_{jh}\\
\bm{0}_v\\
\vdots\\
\bm{0}_v
\end{pmatrix}\\
&=
\begin{pmatrix}
\frac{w^1_1w^0_1}{w_1}\left(\xbarbar^1_1 - \xbarbar^0_1\right)\\
\vdots\\
\frac{w^1_hw^0_h}{w_h}\left(\xbarbar^1_h - \xbarbar^0_h\right)\\
\bm{0}_v\\
\vdots\\
\bm{0}_v
\end{pmatrix}\\
\bm{C} &=
\begin{pmatrix}
\sum_{j=1}^{m_1}w_{j1}\tilde{T}_{j1}\tilde{\overline{\bm{x}}}_{j1}' & \cdots & \sum_{j=1}^{m_h}w_{jh}\tilde{T}_{jh}\tilde{\overline{\bm{x}}}_{jh}' & \bm{0}_v' & \cdots & \bm{0}_v'
\end{pmatrix}\\
&=
\begin{pmatrix}
\frac{w^1_1w^0_1}{w_1}\left(\xbarbar^1_1 - \xbarbar^0_1\right)' & \cdots & \frac{w^1_hw^0_h}{w_h}\left(\xbarbar^1_h - \xbarbar^0_h\right)' & \bm{0}_v' & \cdots & \bm{0}_v'
\end{pmatrix}\\
\bm{D} &=  \left(\sum_{b=1}^h\sum_{j=1}^{m_b}\sum_{i=1}^{n_{jb}}w_{ijb}\tilde{\bm{x}}_{ijb}'\tilde{\bm{x}}_{ijb}\right).
\end{align*}

We can then use the following matrix inversion formula:
\begin{align}
\begin{pmatrix}
\bm{A} & \bm{B}\\
\bm{C} & \bm{D}
\end{pmatrix}^{-1} &=
\begin{pmatrix}
(\bm{A} - \bm{B}\bm{D}^{-1}\bm{C})^{-1} & -\bm{A}^{-1}\bm{B}(\bm{D} - \bm{C}\bm{A}^{-1}\bm{B})^{-1}\\
-\bm{D}^{-1}\bm{C}(\bm{A} - \bm{B}\bm{D}^{-1}\bm{C})^{-1}& (\bm{D} - \bm{C}\bm{A}^{-1}\bm{B})^{-1}
\end{pmatrix}.  \label{eq:mat_inv_2}
\end{align}

To make things easier, we will first derive a result for
\[\hat{\bm{\beta}}_1 +\begin{pmatrix}
\xbarbar^1_1 - \xbarbar^0_1\\
\vdots\\
\xbarbar^1_h - \xbarbar^0_h
\end{pmatrix} \hat{\bm{\gamma}}.\]

We have, based on our matrix inversion, that
\begin{align*}
\hat{\bm{\beta}}_1 =& \begin{pmatrix}
1 & \cdots & 0 & 0 & \cdots & 0 \\
\vdots & \ddots & \vdots & \vdots & \ddots& \vdots \\
0 & \cdots & 1 & 0 & \cdots& 0 
\end{pmatrix}
\begin{pmatrix}
(\bm{A} - \bm{B}\bm{D}^{-1}\bm{C})^{-1} & -\bm{A}^{-1}\bm{B}(\bm{D} - \bm{C}\bm{A}^{-1}\bm{B})^{-1}
\end{pmatrix}\\
&\times\begin{pmatrix}
\sum_{j=1}^{m_1}w_{j1}\tilde{T}_{j1}\bar{y}_{j1}\\
\vdots\\
\sum_{j=1}^{m_h}w_{jh}\tilde{T}_{jh}\bar{y}_{jh}\\
\sum_{j=1}^{m_1}w_{j1}\bar{y}_{j1}\\
\vdots\\
\sum_{j=1}^{m_h}w_{jh}\bar{y}_{jh}\\
\sum_{b=1}^h\sum_{j=1}^{m_b}\sum_{i=1}^{n_{jb}}w_{ijb}\tilde{\bm{x}}_{ijb}'y_{ijb}
\end{pmatrix}\\
=&\begin{pmatrix}
1 & \cdots & 0 & 0 & \cdots & 0 \\
\vdots & \ddots & \vdots & \vdots & \ddots& \vdots \\
0 & \cdots & 1 & 0 & \cdots& 0 
\end{pmatrix}
\begin{pmatrix}
(\bm{A} - \bm{B}\bm{D}^{-1}\bm{C})^{-1} & -\bm{A}^{-1}\bm{B}(\bm{D} - \bm{C}\bm{A}^{-1}\bm{B})^{-1}
\end{pmatrix}\\
&\times\begin{pmatrix}
\frac{w^1_1w^0_1}{w_1}\left(\barbary_1(1) - \barbary_1(0)\right)\\
\vdots\\
\frac{w^1_hw^0_h}{w_h}\left(\barbary_h(1) - \barbary_h(0)\right)\\
\sum_{j=1}^{m_1}w_{j1}\bar{y}_{j1}\\
\vdots\\
\sum_{j=1}^{m_h}w_{jh}\bar{y}_{jh}\\
\sum_{b=1}^h\sum_{j=1}^{m_b}\sum_{i=1}^{n_{jb}}w_{ijb}\tilde{\bm{x}}_{ijb}'y_{ijb}
\end{pmatrix}
\end{align*}
and
\begin{align*}
&\begin{pmatrix}
\xbarbar^1_1 - \xbarbar^0_1\\
\vdots\\
\xbarbar^1_h - \xbarbar^0_h
\end{pmatrix}\hat{\bm{\gamma}}\\
 & = 
\begin{pmatrix}
\xbarbar^1_1 - \xbarbar^0_1\\
\vdots\\
\xbarbar^1_h - \xbarbar^0_h
\end{pmatrix}
\begin{pmatrix}
-\bm{D}^{-1}\bm{C}(\bm{A} - \bm{B}\bm{D}^{-1}\bm{C})^{-1}& (\bm{D} - \bm{C}\bm{A}^{-1}\bm{B})^{-1}
\end{pmatrix}
\begin{pmatrix}
\frac{w^1_1w^0_1}{w_1}\left(\barbary_1(1) - \barbary_1(0)\right)\\
\vdots\\
\frac{w^1_hw^0_h}{w_h}\left(\barbary_h(1) - \barbary_h(0)\right)\\
\sum_{j=1}^{m_1}w_{j1}\bar{y}_{j1}\\
\vdots\\
\sum_{j=1}^{m_h}w_{jh}\bar{y}_{jh}\\
\sum_{b=1}^h\sum_{j=1}^{m_b}\sum_{i=1}^{n_{jb}}w_{ijb}\tilde{\bm{x}}_{ijb}'y_{ijb}
\end{pmatrix}.
\end{align*}

We see there are a lot of common terms when we add these expressions.
Let's first simplify
\[-\begin{pmatrix}
1 & \cdots & 0 & 0 & \cdots & 0 \\
\vdots & \ddots & \vdots & \vdots & \ddots& \vdots \\
0 & \cdots & 1 & 0 & \cdots& 0 
\end{pmatrix}\bm{A}^{-1}\bm{B}+ \begin{pmatrix}
\xbarbar^1_1 - \xbarbar^0_1\\
\vdots\\
\xbarbar^1_h - \xbarbar^0_h
\end{pmatrix}.\]

\begin{align*}
\begin{pmatrix}
1 & \cdots & 0 & 0 & \cdots & 0 \\
\vdots & \ddots & \vdots & \vdots & \ddots& \vdots \\
0 & \cdots & 1 & 0 & \cdots& 0 
\end{pmatrix}\bm{A}^{-1}\bm{B} =&
\begin{pmatrix}
1 & \cdots & 0 & 0 & \cdots & 0 \\
\vdots & \ddots & \vdots & \vdots & \ddots& \vdots \\
0 & \cdots & 1 & 0 & \cdots& 0 
\end{pmatrix}
\begin{pmatrix}
\frac{w_1}{w^1_1w^0_1} & \cdots & 0 & 0 & \cdots & 0\\
\vdots & \ddots & \vdots & \vdots & \ddots & \vdots\\
0 & \cdots & \frac{w_h}{w^1_hw^0_h} & 0 & \cdots & 0\\
0 & \cdots & 0 & \frac{1}{w_1} & \cdots & 0\\
\vdots & \ddots & \vdots & \vdots & \ddots & \vdots\\
0 & \cdots & 0 & 0 & \cdots & \frac{1}{w_h} 
\end{pmatrix}\\
&\times\begin{pmatrix}
\frac{w^1_1w^0_1}{w_1}\left(\xbarbar^1_1 - \xbarbar^0_1\right)\\
\vdots\\
\frac{w^1_hw^0_h}{w_h}\left(\xbarbar^1_h - \xbarbar^0_h\right)\\
\bm{0}_v\\
\vdots\\
\bm{0}_v
\end{pmatrix}\\
=&
\begin{pmatrix}
1 & \cdots & 0 & 0 & \cdots & 0 \\
\vdots & \ddots & \vdots & \vdots & \ddots& \vdots \\
0 & \cdots & 1 & 0 & \cdots& 0 
\end{pmatrix}
\begin{pmatrix}
\xbarbar^1_1 - \xbarbar^0_1\\
\vdots\\
\xbarbar^1_h - \xbarbar^0_h\\
\bm{0}_v\\
\vdots\\
\bm{0}_v
\end{pmatrix}\\
=&\begin{pmatrix}
\xbarbar^1_1 - \xbarbar^0_1\\
\vdots\\
\xbarbar^1_h - \xbarbar^0_h
\end{pmatrix}
\end{align*}
We see that 
\[-\begin{pmatrix}
1 & \cdots & 0 & 0 & \cdots & 0 \\
\vdots & \ddots & \vdots & \vdots & \ddots& \vdots \\
0 & \cdots & 1 & 0 & \cdots& 0 
\end{pmatrix}\bm{A}^{-1}\bm{B}+ \begin{pmatrix}
\xbarbar^1_1 - \xbarbar^0_1\\
\vdots\\
\xbarbar^1_h - \xbarbar^0_h
\end{pmatrix} = 
\begin{pmatrix}
\bm{0}_v\\
\vdots\\
\bm{0}_v
\end{pmatrix}.\]
Now we need to simplify
\[
-\begin{pmatrix}
1 & \cdots & 0 & 0 & \cdots & 0 \\
\vdots & \ddots & \vdots & \vdots & \ddots& \vdots \\
0 & \cdots & 1 & 0 & \cdots& 0 
\end{pmatrix} - \begin{pmatrix}
\xbarbar^1_1 - \xbarbar^0_1\\
\vdots\\
\xbarbar^1_h - \xbarbar^0_h
\end{pmatrix}
\bm{D}^{-1}\bm{C}.
\]

But first let's look at $\bm{A} - \bm{B}\bm{D}^{-1}\bm{C}$.
\begin{align*}
\bm{A} - \bm{B}\bm{D}^{-1}\bm{C}
 &=
\begin{pmatrix}
\frac{w^1_1w^0_1}{w_1} & \cdots & 0 & 0 & \cdots & 0\\
\vdots & \ddots & \vdots & \vdots & \ddots & \vdots\\
0 & \cdots & \frac{w^1_hw^0_h}{w_h} & 0 & \cdots & 0\\
0 & \cdots & 0 & w_1 & \cdots & 0\\
\vdots & \ddots & \vdots & \vdots & \ddots & \vdots\\
0 & \cdots & 0 & 0 & \cdots & w_h 
\end{pmatrix} - 
 \begin{pmatrix}
\frac{w^1_1w^0_1}{w_1}\left(\xbarbar^1_1 - \xbarbar^0_1\right)\\
\vdots\\
\frac{w^1_hw^0_h}{w_h}\left(\xbarbar^1_h - \xbarbar^0_h\right)\\
\bm{0}_v\\
\vdots\\
\bm{0}_v
\end{pmatrix} \bm{D}^{-1} \bm{C}\\
 &=
 \begin{pmatrix}
\frac{w^1_1w^0_1}{w_1} & \cdots & 0 & 0 & \cdots & 0\\
\vdots & \ddots & \vdots & \vdots & \ddots & \vdots\\
0 & \cdots & \frac{w^1_hw^0_h}{w_h} & 0 & \cdots & 0\\
0 & \cdots & 0 & w_1 & \cdots & 0\\
\vdots & \ddots & \vdots & \vdots & \ddots & \vdots\\
0 & \cdots & 0 & 0 & \cdots & w_h 
\end{pmatrix}
\left(\begin{pmatrix}
1 & \cdots & 0 & 0 & \cdots & 0\\
\vdots & \ddots & \vdots & \vdots & \ddots & \vdots\\
0 & \cdots & 1 & 0 & \cdots & 0\\
0 & \cdots & 0 & 1 & \cdots & 0\\
\vdots & \ddots & \vdots & \vdots & \ddots & \vdots\\
0 & \cdots & 0 & 0 & \cdots & 1 
\end{pmatrix} - 
 \begin{pmatrix}
\left(\xbarbar^1_1 - \xbarbar^0_1\right)\\
\vdots\\
\left(\xbarbar^1_h - \xbarbar^0_h\right)\\
\bm{0}_v\\
\vdots\\
\bm{0}_v
\end{pmatrix} \bm{D}^{-1} \bm{C}\right)
\end{align*}

Now let's return to \[
\begin{pmatrix}
1 & \cdots & 0 & 0 & \cdots & 0\\
\vdots & \ddots & \vdots & \vdots & \ddots & \vdots\\
0 & \cdots & 1 & 0 & \cdots & 0\\
0 & \cdots & 0 & 1 & \cdots & 0\\
\vdots & \ddots & \vdots & \vdots & \ddots & \vdots\\
0 & \cdots & 0 & 0 & \cdots & 1 
\end{pmatrix} - \begin{pmatrix}
\xbarbar^1_1 - \xbarbar^0_1\\
\vdots\\
\xbarbar^1_h - \xbarbar^0_h
\end{pmatrix}
\bm{D}^{-1}\bm{C}.
\]
\begin{align*}
\begin{pmatrix}
1 & \cdots & 0 & 0 & \cdots & 0\\
\vdots & \ddots & \vdots & \vdots & \ddots & \vdots\\
0 & \cdots & 1 & 0 & \cdots & 0\\
0 & \cdots & 0 & 1 & \cdots & 0\\
\vdots & \ddots & \vdots & \vdots & \ddots & \vdots\\
0 & \cdots & 0 & 0 & \cdots & 1 
\end{pmatrix} - \begin{pmatrix}
\xbarbar^1_1 - \xbarbar^0_1\\
\vdots\\
\xbarbar^1_h - \xbarbar^0_h
\end{pmatrix}
\bm{D}^{-1}\bm{C}&= 
\begin{pmatrix}
\frac{w_1}{w^1_1w^0_1} &\cdots& 0 & 0 & \cdots & 0 \\
\vdots &\ddots& \vdots & \vdots & \ddots & \vdots \\
0 & \cdots & \frac{w_2}{w^1_2w^0_2} & 0 & \cdots & 0 
\end{pmatrix} \left(\bm{A} - \bm{B}\bm{D}^{-1}\bm{C}\right)
\end{align*}
Hence,
\begin{align*}
\left(\begin{pmatrix}
1 & \cdots & 0 & 0 & \cdots & 0\\
\vdots & \ddots & \vdots & \vdots & \ddots & \vdots\\
0 & \cdots & 1 & 0 & \cdots & 0\\
0 & \cdots & 0 & 1 & \cdots & 0\\
\vdots & \ddots & \vdots & \vdots & \ddots & \vdots\\
0 & \cdots & 0 & 0 & \cdots & 1 
\end{pmatrix} - \begin{pmatrix}
\xbarbar^1_1 - \xbarbar^0_1\\
\vdots\\
\xbarbar^1_h - \xbarbar^0_h
\end{pmatrix}
\bm{D}^{-1}\bm{C}\right)\left(\bm{A} - \bm{B}\bm{D}^{-1}\bm{C}\right)^{-1} 
&= \begin{pmatrix}
\frac{w_1}{w^1_1w^0_1} &\cdots& 0 & 0 & \cdots & 0 \\
\vdots &\ddots& \vdots & \vdots & \ddots & \vdots \\
0 & \cdots & \frac{w_2}{w^1_2w^0_2} & 0 & \cdots & 0 
\end{pmatrix}.
\end{align*}

Putting it all together:
\begin{align*}
\hat{\bm{\beta}}_1 +\begin{pmatrix}
\xbarbar^1_1 - \xbarbar^0_1\\
\vdots\\
\xbarbar^1_h - \xbarbar^0_h
\end{pmatrix} \hat{\bm{\gamma}}
&=
\begin{pmatrix}
\frac{w_1}{w^1_1w^0_1} &\cdots& 0 & 0 & \cdots & 0 \\
\vdots &\ddots& \vdots & \vdots & \ddots & \vdots \\
0 & \cdots & \frac{w_2}{w^1_2w^0_2} & 0 & \cdots & 0 
\end{pmatrix}
\begin{pmatrix}
\frac{w^1_1w^0_1}{w_1}\left(\barbary_1(1) - \barbary_1(0)\right)\\
\vdots\\
\frac{w^1_hw^0_h}{w_h}\left(\barbary_h(1) - \barbary_h(0)\right)\\
\sum_{j=1}^{m_1}w_{j1}\bar{y}_{j1}\\
\vdots\\
\sum_{j=1}^{m_h}w_{jh}\bar{y}_{jh}\\
\sum_{b=1}^h\sum_{j=1}^{m_b}\sum_{i=1}^{n_{jb}}w_{ijb}\tilde{\bm{x}}_{ijb}'y_{ijb}
\end{pmatrix}\\
&=
\begin{pmatrix}
\barbary_1(1) - \barbary_1(0)\\
\vdots\\
\barbary_h(1) - \barbary_h(0)
\end{pmatrix}.
\end{align*}

Hence, we have the desired result,
\[\hat{\bm{\beta}}_1 =\begin{pmatrix}
\barbary_1(1) - \barbary_1(0)\\
\vdots\\
\barbary_h(1) - \barbary_h(0)
\end{pmatrix} -\begin{pmatrix}
\xbarbar^1_1 - \xbarbar^0_1\\
\vdots\\
\xbarbar^1_h - \xbarbar^0_h
\end{pmatrix} \hat{\bm{\gamma}}.\]

\subsubsection{Consistency}\label{subsec:cov_cons_no_int}

Here we will show that $\hat{\beta}_{1,b} \xrightarrow[]{p} \frac{1}{\omega_b}\left(\mu^*_{b}(1) - \mu^*_{b}(0)\right) = \beta_{1,b}^*$.
We assume the same limiting values for our average potential outcomes and weights as in Section~\ref{subsec:cons_no_cov}.
Assume that we have finite limiting values on the following weighted variance/covariance expressions, denoted as follows:
\[\bm{S}^2_{\bm{x}, b} = \frac{1}{m_b}\sum_{j=1}^{m_b}\sum_{i=1}^{n_{jb}}w_{ijb}\left(\bm{x}_{ijb}-\xbarbar_b\right)'\left(\bm{x}_{ijb}-\xbarbar_b\right) \xrightarrow[]{p} \bm{\Sigma}^2_{\bm{x}, b}\]
and
\[\bm{S}^2_{\bm{x}, Y, b}(t) = \frac{1}{m_b}\sum_{j=1}^{m_b}\sum_{i=1}^{n_{jb}}w_{ijb}\bm{x}_{ijb}'Y_{ijb}(t)-\xbarbar_b'\overline{wY(t)}_b \xrightarrow[]{p} \bm{\Sigma}^2_{\bm{x}, Y(t), b}.\]
Assume we also have a (unnamed) limiting value on the following variance expression:
\[\bm{S}^2_{\bm{x}Y, b}(t) = \frac{1}{m_b}\sum_{j=1}^{m_b}\sum_{i=1}^{n_{jb}}\left(w_{ijb}\bm{x}_{ijb}'Y_{ijb}(t)-\overline{wxY(t)}_b\right)^2\]
with
\[\overline{wxY(t)}_b =  \frac{1}{m_b}\sum_{j=1}^{m_b}\sum_{i=1}^{n_{jb}}w_{ijb}\bm{x}_{ijb}'Y_{ijb}(t).\]
Also assume that we have finite limiting values on the variances for the potential outcomes.
Further assume we have limiting values $\overline{\bm{X}}^*_{b}$, $\overline{\bm{X}'\bm{X}}^*_{b}$, and $\overline{\bm{X}\mu}_{wb}(t)$ such that $\frac{1}{m_b}\sum_{j=1}^{m_b}w_{jb}\overline{\bm{x}}_{jb} \xrightarrow[]{p} \overline{\bm{X}}^*_{b}$, $\frac{1}{m_b}\sum_{j=1}^{m_b}\sum_{i=1}^{n_{jb}}w_{ijb}\bm{x}_{ijb}'\bm{x}_{ijb} \xrightarrow[]{p} \overline{\bm{X}'\bm{X}}^*_{b}$, and $\frac{1}{m_b}\sum_{j=1}^{m_b}\sum_{i=1}^{n_{jb}}w_{ijb}\tilde{\bm{x}}_{ijb}'Y_{ijb}(t) \xrightarrow[]{p} \overline{\bm{X}\mu}^*_{b}(t)$.
Again, we assume that $m_b/m \xrightarrow[]{p} q_b$ where $0 < q_b < 1$, such that each block is growing to infinity with $m = \sum_{b=1}^hm_b$.
In Section~\ref{subsec:cons_no_cov}, we already showed that $\barbary_b(1) - \barbary_b(0) \xrightarrow[]{p} \mu^*_{b}(1) - \mu^*_{b}(0)$.
Thus, we now need to examine the extra term in this new $\hat{\beta}_{1,b}$, for which we need to simplify the asymptotic form of $\hat{\bm{\gamma}}$.

\begin{align*}
&\begin{pmatrix}
\hat{\bm{\beta}}_1\\
\hat{\bm{\beta}}_0\\
\hat{\bm{\gamma}}\\
\end{pmatrix}\\ 
=&
\begin{pmatrix}
\sum_{j=1}^{m_1}w_{j1}\tilde{T}_{j1}^2 & \cdots& 0 & 0 &\cdots& 0 & \sum_{j=1}^{m_1}w_{j1}\tilde{T}_{j1}\tilde{\overline{\bm{x}}}_{j1}\\
\vdots  & \ddots & \vdots & \vdots & \ddots & \vdots & \vdots \\
0 &\cdots & \sum_{j=1}^{m_h}w_{jh}\tilde{T}_{jh}^2 & 0 & \cdots & 0 & \sum_{j=1}^{m_h}w_{jh}\tilde{T}_{jh}\tilde{\overline{\bm{x}}}_{jh}\\
0  & \cdots & 0 & w_1 & \cdots& 0 & \bm{0}_v\\
\vdots & \ddots & \vdots & \vdots & \ddots & \vdots & \bm{0}_v\\
0  & \cdots & 0 & 0 & \cdots & w_h & \bm{0}_v\\
\sum_{j=1}^{m_1}w_{j1}\tilde{T}_{j1}\tilde{\overline{\bm{x}}}_{j1}' & \cdots & \sum_{j=1}^{m_h}w_{jh}\tilde{T}_{jh}\tilde{\overline{\bm{x}}}_{jh}'  & \bm{0}_v' & \cdots & \bm{0}_v' & \sum_{b=1}^h\sum_{j=1}^{m_b}\sum_{i=1}^{n_{jb}}w_{ijb}\tilde{\bm{x}}_{ijb}'\tilde{\bm{x}}_{ijb}
\end{pmatrix}^{-1}\\
&\times\begin{pmatrix}
\sum_{j=1}^{m_1}w_{j1}\tilde{T}_{j1}\bar{y}_{j1}\\
\vdots\\
\sum_{j=1}^{m_h}w_{jh}\tilde{T}_{jh}\bar{y}_{jh}\\
\sum_{j=1}^{m_1}w_{j1}\bar{y}_{j1}\\
\vdots\\
\sum_{j=1}^{m_h}w_{jh}\bar{y}_{jh}\\
\sum_{b=1}^h\sum_{j=1}^{m_b}\sum_{i=1}^{n_{jb}}w_{ijb}\tilde{\bm{x}}_{ijb}'y_{ijb}
\end{pmatrix}\\
=&
\begin{pmatrix}
\frac{1}{m_1}\sum_{j=1}^{m_1}w_{j1}\tilde{T}_{j1}^2 & \cdots& 0 & 0 &\cdots& 0 & \frac{1}{m_1}\sum_{j=1}^{m_1}w_{j1}\tilde{T}_{j1}\tilde{\overline{\bm{x}}}_{j1}\\
\vdots  & \ddots & \vdots & \vdots & \ddots & \vdots & \vdots \\
0 &\cdots & \frac{1}{m_h}\sum_{j=1}^{m_h}w_{jh}\tilde{T}_{jh}^2 & 0 & \cdots & 0 & \frac{1}{m_h}\sum_{j=1}^{m_h}w_{jh}\tilde{T}_{jh}\tilde{\overline{\bm{x}}}_{jh}\\
0  & \cdots & 0 & \frac{1}{m_1}w_1 & \cdots& 0 & \bm{0}_v\\
\vdots & \ddots & \vdots & \vdots & \ddots & \vdots & \bm{0}_v\\
0  & \cdots & 0 & 0 & \cdots & \frac{1}{m_h}w_h & \bm{0}_v\\
\frac{1}{m}\sum_{j=1}^{m_1}w_{j1}\tilde{T}_{j1}\tilde{\overline{\bm{x}}}_{j1}' & \cdots & \frac{1}{m}\sum_{j=1}^{m_h}w_{jh}\tilde{T}_{jh}\tilde{\overline{\bm{x}}}_{jh}'  & \bm{0}_v' & \cdots & \bm{0}_v' & \frac{1}{m}\sum_{b=1}^h\sum_{j=1}^{m_b}\sum_{i=1}^{n_{jb}}w_{ijb}\tilde{\bm{x}}_{ijb}'\tilde{\bm{x}}_{ijb}
\end{pmatrix}^{-1}\\
&\times\begin{pmatrix}
\frac{1}{m_1}\sum_{j=1}^{m_1}w_{j1}\tilde{T}_{j1}\bar{y}_{j1}\\
\vdots\\
\frac{1}{m_h}\sum_{j=1}^{m_h}w_{jh}\tilde{T}_{jh}\bar{y}_{jh}\\
\frac{1}{m_1}\sum_{j=1}^{m_1}w_{j1}\bar{y}_{j1}\\
\vdots\\
\frac{1}{m_h}\sum_{j=1}^{m_h}w_{jh}\bar{y}_{jh}\\
\frac{1}{m}\sum_{b=1}^h\sum_{j=1}^{m_b}\sum_{i=1}^{n_{jb}}w_{ijb}\tilde{\bm{x}}_{ijb}'y_{ijb}
\end{pmatrix}.
\end{align*}

We have 
\[\frac{1}{m_b}\sum_{j=1}^{m_b}w_{jb}\tilde{T}_{jb}^2 = \frac{1}{m_b}\frac{\wT_b\wC_b}{w_b} \xrightarrow[]{p} p_b(1-p_b)\omega_b,\]
\[ \frac{1}{m_b}\sum_{j=1}^{m_b}w_{jb} \xrightarrow[]{p} \omega_b,\]
and
\[\frac{1}{m_b}\sum_{j=1}^{m_b}w_{jb}\tilde{T}_{jb}\tilde{\overline{\bm{x}}}_{jb} = \frac{1}{m_b}\frac{\wC_b\wT_b}{w_b}\left(\xbarT_b-\xbarC_b\right) \xrightarrow[]{p} 0\]
because 
\begin{align*}
\xbarT_b-\xbarC_b \xrightarrow[]{p} \frac{\overline{\bm{X}}^*_{b}}{\omega_b} - \frac{\overline{\bm{X}}^*_{b}}{\omega_b} = 0.
\end{align*}
We also have
\begin{align*}
&\frac{1}{m}\sum_{b=1}^h\sum_{j=1}^{m_b}\sum_{i=1}^{n_{jb}}w_{ijb}\tilde{\bm{x}}_{ijb}'\tilde{\bm{x}}_{ijb}\\
&= \frac{1}{m}\sum_{b=1}^h\sum_{j=1}^{m_b}\sum_{i=1}^{n_{jb}}w_{ijb}\left(\bm{x}_{ijb}-\xbarbar_b\right)'\left(\bm{x}_{ijb}-\xbarbar_b\right)\\
&= \sum_{b=1}^h\frac{m_b}{m}\frac{1}{m_b}\sum_{j=1}^{m_b}\sum_{i=1}^{n_{jb}}w_{ijb}\left(\bm{x}_{ijb}-\xbarbar_b\right)'\left(\bm{x}_{ijb}-\xbarbar_b\right)\\
&\xrightarrow[]{p}  \sum_{b=1}^hq_b\bm{\Sigma}^2_{\bm{x}, b}.
\end{align*}

Then we have
\begin{align*}
&\begin{pmatrix}
\frac{1}{m_1}\sum_{j=1}^{m_1}w_{j1}\tilde{T}_{j1}^2 & \cdots& 0 & 0 &\cdots& 0 & \frac{1}{m_1}\sum_{j=1}^{m_1}w_{j1}\tilde{T}_{j1}\tilde{\overline{\bm{x}}}_{j1}\\
\vdots  & \ddots & \vdots & \vdots & \ddots & \vdots & \vdots \\
0 &\cdots & \frac{1}{m_h}\sum_{j=1}^{m_h}w_{jh}\tilde{T}_{jh}^2 & 0 & \cdots & 0 & \frac{1}{m_h}\sum_{j=1}^{m_h}w_{jh}\tilde{T}_{jh}\tilde{\overline{\bm{x}}}_{jh}\\
0  & \cdots & 0 & \frac{1}{m_1}w_1 & \cdots& 0 & \bm{0}_v\\
\vdots & \ddots & \vdots & \vdots & \ddots & \vdots & \bm{0}_v\\
0  & \cdots & 0 & 0 & \cdots & \frac{1}{m_h}w_h & \bm{0}_v\\
\frac{1}{m}\sum_{j=1}^{m_1}w_{j1}\tilde{T}_{j1}\tilde{\overline{\bm{x}}}_{j1}' & \cdots & \frac{1}{m}\sum_{j=1}^{m_h}w_{jh}\tilde{T}_{jh}\tilde{\overline{\bm{x}}}_{jh}'  & \bm{0}_v' & \cdots & \bm{0}_v' & \frac{1}{m}\sum_{b=1}^h\sum_{j=1}^{m_b}\sum_{i=1}^{n_{jb}}w_{ijb}\tilde{\bm{x}}_{ijb}'\tilde{\bm{x}}_{ijb}
\end{pmatrix}\\
&\xrightarrow[]{p}
\begin{pmatrix}
p_1(1-p_1)\omega_1 & \cdots & 0 & 0 & \cdots & 0 &  \bm{0}_v\\
\vdots & \ddots & \vdots & \vdots & \ddots & \vdots &  \vdots\\
0 & \cdots & p_h(1-p_h)\omega_h& 0 & \cdots & 0 &  \bm{0}_v\\
0 & \cdots & 0 & \omega_1 & \cdots & 0 & \bm{0}_v\\
\vdots & \ddots & \vdots & \vdots & \ddots & \vdots & \vdots\\
0 & \cdots & 0 & 0 & \cdots & \omega_h & \bm{0}_v\\
\bm{0}_v' & \cdots & \bm{0}_v' & \bm{0}_v' &  \cdots & \bm{0}_v' & \sum_{b=1}^hq_b\bm{\Sigma}^2_{\bm{x}, b}
\end{pmatrix}.
\end{align*}

Hence, by continuity of the inverse and Slutsky's theorem,
\begin{align*}
&\begin{pmatrix}
\frac{1}{m_1}\sum_{j=1}^{m_1}w_{j1}\tilde{T}_{j1}^2 & \cdots& 0 & 0 &\cdots& 0 & \frac{1}{m_1}\sum_{j=1}^{m_1}w_{j1}\tilde{T}_{j1}\tilde{\overline{\bm{x}}}_{j1}\\
\vdots  & \ddots & \vdots & \vdots & \ddots & \vdots & \vdots \\
0 &\cdots & \frac{1}{m_h}\sum_{j=1}^{m_h}w_{jh}\tilde{T}_{jh}^2 & 0 & \cdots & 0 & \frac{1}{m_h}\sum_{j=1}^{m_h}w_{jh}\tilde{T}_{jh}\tilde{\overline{\bm{x}}}_{jh}\\
0  & \cdots & 0 & \frac{1}{m_1}w_1 & \cdots& 0 & \bm{0}_v\\
\vdots & \ddots & \vdots & \vdots & \ddots & \vdots & \bm{0}_v\\
0  & \cdots & 0 & 0 & \cdots & \frac{1}{m_h}w_h & \bm{0}_v\\
\frac{1}{m}\sum_{j=1}^{m_1}w_{j1}\tilde{T}_{j1}\tilde{\overline{\bm{x}}}_{j1}' & \cdots & \frac{1}{m}\sum_{j=1}^{m_h}w_{jh}\tilde{T}_{jh}\tilde{\overline{\bm{x}}}_{jh}'  & \bm{0}_v' & \cdots & \bm{0}_v' & \frac{1}{m}\sum_{b=1}^h\sum_{j=1}^{m_b}\sum_{i=1}^{n_{jb}}w_{ijb}\tilde{\bm{x}}_{ijb}'\tilde{\bm{x}}_{ijb}
\end{pmatrix}^{-1}\\
&\xrightarrow[]{p}
\begin{pmatrix}
\frac{1}{p_1(1-p_1)\omega_1} & \cdots & 0 & 0 & \cdots & 0 &  \bm{0}_v\\
\vdots & \ddots & \vdots & \vdots & \ddots & \vdots &  \vdots\\
0 & \cdots & \frac{1}{p_h(1-p_h)\omega_h}& 0 & \cdots & 0 &  \bm{0}_v\\
0 & \cdots & 0 & \frac{1}{\omega_1} & \cdots & 0 & \bm{0}_v\\
\vdots & \ddots & \vdots & \vdots & \ddots & \vdots & \vdots\\
0 & \cdots & 0 & 0 & \cdots & \frac{1}{\omega_h }& \bm{0}_v\\
\bm{0}_v' & \cdots & \bm{0}_v' & \bm{0}_v' &  \cdots & \bm{0}_v' & \left(\sum_{b=1}^hq_b\bm{\Sigma}^2_{\bm{x}, b}\right)^{-1}
\end{pmatrix}.
\end{align*}

We also have 
\begin{align*}
\frac{1}{m_b}\sum_{j=1}^{m_b}w_{jb}\tilde{T}_{jb}\bar{y}_{jb} = \frac{1}{m_b}\frac{\wT_b\wC_b}{w_b}\left(\barbary_b(1)-\barbary_b(0)\right) &\xrightarrow[]{p} p_b(1-p_b)\omega_b\left(\frac{\mu^*_{b}(1)}{\omega_b}-\frac{\mu^*_{b}(0)}{\omega_b}\right)\\
& = p_b(1-p_b)\left(\mu^*_{b}(1)-\mu^*_{b}(0)\right)
\end{align*}
and
\[\frac{1}{m_b}\sum_{j=1}^{m_b}w_{jb}\bar{y}_{jb} = p_b\frac{1}{m_b^1}\sum_{j=1}^{m_b}w_{jb}T_{jb}\bar{Y}_{jb}(1) +  (1-p_b)\frac{1}{m_b^0}\sum_{j=1}^{m_b}w_{jb}(1-T_{jb})\bar{Y}_{jb}(0) \xrightarrow[]{p} p_b\mu^*_{b}(1) + (1-p_b)\mu^*_{b}(0).\]
Additionally,
\begin{align*}
&\frac{1}{m}\sum_{b=1}^h\sum_{j=1}^{m_b}\sum_{i=1}^{n_{jb}}w_{ijb}\tilde{\bm{x}}_{ijb}'y_{ijb}\\
 &=  \frac{1}{m}\sum_{b=1}^h\sum_{j=1}^{m_b}\sum_{i=1}^{n_{jb}}T_{jb}w_{ijb}(\bm{x}_{ijb}-\xbarbar_b)Y_{ijb}(1) + \frac{1}{m}\sum_{b=1}^h\sum_{j=1}^{m_b}\sum_{i=1}^{n_{jb}}(1-T_{jb})w_{ijb}(\bm{x}_{ijb}-\xbarbar_b)Y_{ijb}(0)\\
  &=  \sum_{b=1}^h\frac{m_b^1}{m}\left[\frac{1}{m_b^1}\sum_{j=1}^{m_b}\sum_{i=1}^{n_{jb}}T_{jb}w_{ijb}\bm{x}_{ijb}Y_{ijb}(1) - \xbarbar_b\left(\frac{1}{m_b^1}\sum_{j=1}^{m_b}T_{jb}w_{jb}\bar{Y}_{jb}(1)\right)\right]\\
&+\sum_{b=1}^h\frac{m_b^0}{m}\left[\frac{1}{m_b^0}\sum_{j=1}^{m_b}\sum_{i=1}^{n_{jb}}(1-T_{jb})w_{ijb}\bm{x}_{ijb}Y_{ijb}(0) - \xbarbar_b\left(\frac{1}{m_b^0}\sum_{j=1}^{m_b}(1-T_{jb})w_{jb}\bar{Y}_{jb}(0)\right)\right]\\
& \xrightarrow[]{p} \sum_{b=1}^hp_bq_b\bm{\Sigma}^2_{\bm{x}, Y(1), b} + \sum_{b=1}^h(1-p_b)q_b\bm{\Sigma}^2_{\bm{x}, Y(0), b}.
\end{align*}
This last line comes from the following two intermediate steps:
\begin{enumerate}[(a)]
\item Because we have limiting values on the variances of our potential outcomes, we have a law of large numbers type result (see Theorem B of \citet{scott1981asymptotic})
\[ \frac{1}{m_b^1}\sum_{j: T_{jb}  = t}\sum_{i=1}^{n_{jb}}w_{ijb}\bm{x}_{ijb}Y_{ijb}(t) - \xbarbar_b\left(\frac{1}{m_b^1}\sum_{j: T_{jb}  = t}w_{jb}\bar{Y}_{jb}(t)\right) -\bm{S}^2_{\bm{x}, Y, b}(t) \xrightarrow[]{p} 0.\]
\item This implies (see Lemma~\ref{lemma:simple_conv_res}) that
\[ \frac{1}{m_b^1}\sum_{j: T_{jb}  = t}\sum_{i=1}^{n_{jb}}w_{ijb}\bm{x}_{ijb}Y_{ijb}(t) - \xbarbar_b\left(\frac{1}{m_b^1}\sum_{j: T_{jb}  = t}w_{jb}\bar{Y}_{jb}(t)\right)  \xrightarrow[]{p} \bm{\Sigma}^2_{\bm{x}, Y(t), b}.\]
\end{enumerate}

Then we have
\begin{align*}
&\begin{pmatrix}
\frac{1}{m_1}\sum_{j=1}^{m_1}w_{j1}\tilde{T}_{j1}\bar{y}_{j1}\\
\vdots\\
\frac{1}{m_h}\sum_{j=1}^{m_h}w_{jh}\tilde{T}_{jh}\bar{y}_{jh}\\
\frac{1}{m_1}\sum_{j=1}^{m_1}w_{j1}\bar{y}_{j1}\\
\vdots\\
\frac{1}{m_h}\sum_{j=1}^{m_h}w_{jh}\bar{y}_{jh}\\
\frac{1}{m}\sum_{b=1}^h\sum_{j=1}^{m_b}\sum_{i=1}^{n_{jb}}w_{ijb}\tilde{\bm{x}}_{ijb}'y_{ijb}
\end{pmatrix}\\ &\xrightarrow[]{p}
\begin{pmatrix}
p_1(1-p_1)\left(\mu^*_{1}(1)-\mu^*_{1}(0)\right)\\
\vdots\\
p_h(1-p_h)\left(\mu^*_{h}(1)-\mu^*_{h}(0)\right)\\
p_1\mu^*_{1}(1) + (1-p_1)\mu^*_{1}(0)\\
\vdots\\
p_h\mu^*_{h}(1) + (1-p_h)\mu^*_{h}(0)\\
\sum_{b=1}^hp_bq_b\bm{\Sigma}^2_{\bm{x}, Y(1), b} + \sum_{b=1}^h(1-p_b)q_b\bm{\Sigma}^2_{\bm{x}, Y(0), b}
\end{pmatrix}.
\end{align*}

Putting this all together, we have the following result:
\begin{align*}
\begin{pmatrix}
\hat{\bm{\beta}}_1\\
\hat{\bm{\beta}}_0\\
\hat{\bm{\gamma}}\\
\end{pmatrix} \xrightarrow[]{p}&
\begin{pmatrix}
\frac{1}{p_1(1-p_1)\omega_1} & \cdots & 0 & 0 & \cdots & 0 &  \bm{0}_v\\
\vdots & \ddots & \vdots & \vdots & \ddots & \vdots &  \vdots\\
0 & \cdots & \frac{1}{p_h(1-p_h)\omega_h}& 0 & \cdots & 0 &  \bm{0}_v\\
0 & \cdots & 0 & \frac{1}{\omega_1} & \cdots & 0 & \bm{0}_v\\
\vdots & \ddots & \vdots & \vdots & \ddots & \vdots & \vdots\\
0 & \cdots & 0 & 0 & \cdots & \frac{1}{\omega_h }& \bm{0}_v\\
\bm{0}_v' & \cdots & \bm{0}_v' & \bm{0}_v' &  \cdots & \bm{0}_v' & \left(\sum_{b=1}^hq_b\bm{\Sigma}^2_{\bm{x}, b}\right)^{-1}
\end{pmatrix}\\
&\times\begin{pmatrix}
p_1(1-p_1)\left(\mu^*_{1}(1)-\mu^*_{1}(0)\right)\\
\vdots\\
p_h(1-p_h)\left(\mu^*_{h}(1)-\mu^*_{h}(0)\right)\\
p_1\mu^*_{1}(1) + (1-p_1)\mu^*_{1}(0)\\
\vdots\\
p_h\mu^*_{h}(1) + (1-p_h)\mu^*_{h}(0)\\
\sum_{b=1}^hp_bq_b\bm{\Sigma}^2_{\bm{x}, Y(1), b} + \sum_{b=1}^h(1-p_b)q_b\bm{\Sigma}^2_{\bm{x}, Y(0), b}
\end{pmatrix}\\
=&\begin{pmatrix}
\frac{1}{\omega_1}\left(\mu^*_{1}(1)-\mu^*_{1}(0)\right)\\
\vdots\\
\frac{1}{\omega_h}\left(\mu^*_{h}(1)-\mu^*_{h}(0)\right)\\
\frac{1}{\omega_1}\left(p_1\mu^*_{1}(1) + (1-p_1)\mu^*_{1}(0)\right)\\
\vdots\\
\frac{1}{\omega_h}\left(p_2\mu^*_{h}(1) + (1-p_h)\mu^*_{h}(0)\right)\\
\bm{\Gamma}
\end{pmatrix}\\
=&\begin{pmatrix}
\beta_{11}^*\\
\vdots\\
\beta_{1h}^*\\
\frac{1}{\omega_1}\left(p_1\mu^*_{1}(1) + (1-p_1)\mu^*_{1}(0)\right)\\
\vdots\\
\frac{1}{\omega_h}\left(p_h\mu^*_{h}(1) + (1-p_h)\mu^*_{h}(0)\right)\\
\bm{\Gamma}
\end{pmatrix},
\end{align*}
where 
\begin{align*}
\bm{\Gamma} =&\left(\sum_{b=1}^hq_b\bm{\Sigma}^2_{\bm{x}, b}\right)^{-1}\left(\sum_{b=1}^hp_bq_b\bm{\Sigma}^2_{\bm{x}, Y(1), b} + \sum_{b=1}^h(1-p_b)q_b\bm{\Sigma}^2_{\bm{x}, Y(0), b}\right).
\end{align*}

Hence, we have that $\hat{\beta}_{1,b} \xrightarrow[]{p} \beta_{1b}^*$.

\subsubsection{Asymptotic normality with known $\bm{\gamma}$}\label{subsec:covs_asy_norm_know_g_no_int}
We ultimately want to find asymptotic normality results for the estimator for a single block $b$ with covariate adjustment across blocks, $\hat{\beta}_{1,b} = \barbary_b(1) - \barbary_b(0) - \left(\xbarT_b-\xbarC_b\right)\hat{\bm{\gamma}}$.
But first, let us show, similar to \citet{li2017general}, that asymptotic normality holds for $\tilde{\beta}_{1b} = \barbary_b(1) - \barbary_b(0) - \left(\xbarT_b-\xbarC_b\right)\bm{\gamma}$, where $\bm{\gamma}$ is the finite population regression estimator we would have obtained if we had run the regression on the full schedule of potential outcomes, and is thus constant for each treatment arm.
Stated differently, we first find the result when we know $\bm{\gamma}$ and do not have to estimate it.
In particular, 
\[\bm{\gamma} = \left(\sum_{b=1}^hq_b\bm{S}^2_{\bm{x}, b}\right)^{-1}\left(\sum_{b=1}^hp_bq_b\bm{S}^2_{\bm{x}, Y, b}(1) + \sum_{b=1}^h(1-p_b)q_b\bm{S}^2_{\bm{x}, Y, b}(0)\right).\]

In this setting, we have that $C_{jb}(t) = (\bar{Y}_{jb}(t)-\overline{\bm{x}}_{jb}\bm{\gamma})$ and so $D_{jb}(t) = w_{jb}(\bar{Y}_{jb}(t)-\overline{\bm{x}}_{jb}\bm{\gamma}-\barbarY_b(t)+\xbarbar_b\bm{\gamma})/\wbar_b$.
Hence $\overline{\overline{c}}_b(t)= \barbary_b(t) - \xbarT_b\bm{\gamma}$ and $\overline{\overline{C}}_b(t) = \barbarY_b(t) - \xbarbar_b\bm{\gamma}$.

\begin{corollary}\label{cor:known_gamma_no_int}
Under the conditions of Lemma~\ref{lemma:two_ratio_main} for block $b$,
\begin{align*}
\frac{\left(\overline{\overline{c}}_b(1) - \overline{\overline{c}}_b(0) \right) - \left(\overline{\overline{C}}_b(1) - \overline{\overline{C}}_b(0)\right)}{\sqrt{\text{Var}(\hat{D}_b)}} &= \frac{\tilde{\beta}_{1b} - \left(\barbarY_b(1) - \barbarY_b(0)\right)}{\sqrt{\text{Var}(\hat{D}_b)}}\\ &\xrightarrow[]{d} N(0, 1).
\end{align*}
\end{corollary}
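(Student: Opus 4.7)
The plan is to reduce this corollary to a direct application of Lemma~\ref{lemma:two_ratio_main} by using the freedom in that lemma to choose the cluster-level potential outcomes. Since $\bm{\gamma}$ is a \emph{finite population} regression parameter (constant, not estimated), the covariate adjustment with a known $\bm{\gamma}$ is simply a linear shift of the potential outcomes, which fits neatly into the ratio-of-weighted-sums framework.

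First I would verify the key algebraic identifications stated right before the corollary. Setting $C_{jb}(t) = \bar{Y}_{jb}(t) - \overline{\bm{x}}_{jb}\bm{\gamma}$ and noting that $\bm{\gamma}$ is a constant vector, linearity of the weighted average in $C_{jb}(t)$ gives
\[
\overline{\overline{c}}_b(t) = \frac{\sum_{j: T_{jb}=t} w_{jb}(\bar{Y}_{jb}(t) - \overline{\bm{x}}_{jb}\bm{\gamma})}{\sum_{j: T_{jb}=t} w_{jb}} = \barbary_b(t) - \xbarbar^t_b\bm{\gamma}
\]
and similarly $\overline{\overline{C}}_b(t) = \barbarY_b(t) - \xbarbar_b\bm{\gamma}$. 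Subtracting across treatment arms, the $\xbarbar_b\bm{\gamma}$ terms cancel in the finite population difference while the treatment-arm-specific $\xbarbar^1_b, \xbarbar^0_b$ remain in the estimator, yielding exactly
\[
\overline{\overline{c}}_b(1) - \overline{\overline{c}}_b(0) = \tilde{\beta}_{1b}, \qquad \overline{\overline{C}}_b(1) - \overline{\overline{C}}_b(0) = \barbarY_b(1) - \barbarY_b(0).
\]
So the numerator of the standardized quantity in the corollary is literally the numerator appearing in Lemma~\ref{lemma:two_ratio_main} for the shifted outcomes.

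Next I would verify that the denominator matches: with the same $C_{jb}(t)$, the quantity $D_{jb}(t) = w_{jb}(\bar{Y}_{jb}(t) - \overline{\bm{x}}_{jb}\bm{\gamma} - \barbarY_b(t) + \xbarbar_b\bm{\gamma})/\wbar_b$ is exactly the one used in the statement, and $\mathrm{Var}(\hat{D}_b)$ is built from these $D_{jb}(t)$ via the formula in Lemma~\ref{lemma:two_ratio_main}. Thus both the numerator and denominator of the standardized statistic in Corollary~\ref{cor:known_gamma_no_int} coincide with those in Lemma~\ref{lemma:two_ratio_main} applied to the shifted potential outcomes $C_{jb}(t)$.

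At that point the proof reduces to invoking Lemma~\ref{lemma:two_ratio_main} directly, under its stated conditions on these shifted outcomes. The only substantive point to check is that the regularity conditions of the lemma hold for $C_{jb}(t) = \bar{Y}_{jb}(t) - \overline{\bm{x}}_{jb}\bm{\gamma}$ rather than for $\bar{Y}_{jb}(t)$ itself. The corollary's hypothesis states these conditions outright, and they are natural since a bounded linear shift by a constant vector $\bm{\gamma}$ preserves the maximum-of-squares rate and the existence of limiting weighted variances (condition (b)); condition (c) depends only on the weights and is unchanged. The main thing to be careful about is that the Lindeberg-type condition (a) of Lemma~\ref{lemma:two_ratio_main} is now a condition on the adjusted residuals, so it is not automatic from the corresponding unadjusted condition; the cleanest path is simply to assume it (as the corollary does) rather than derive it from the unadjusted version. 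With that, Lemma~\ref{lemma:two_ratio_main} yields convergence in distribution to $N(0,1)$ and the proof is complete.
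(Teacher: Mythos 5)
Your proposal is correct and follows essentially the same route as the paper: the paper's proof also identifies $C_{jb}(t) = \bar{Y}_{jb}(t)-\overline{\bm{x}}_{jb}\bm{\gamma}$, observes that the constant shift $\xbarbar_b\bm{\gamma}$ cancels in the finite-population difference so that $\overline{\overline{C}}_b(1)-\overline{\overline{C}}_b(0)=\barbarY_b(1)-\barbarY_b(0)$, and then cites Lemma~\ref{lemma:two_ratio_main} directly. Your write-up is somewhat more explicit (checking the denominator and the transfer of the regularity conditions to the adjusted outcomes), but the substance is identical.
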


\begin{proof}
The result is a direct consequence of Lemma~\ref{lemma:two_ratio_main}.
The equality holds by noting that
\begin{align*}
\frac{\sum_{j=1}^{m_b}w_{jb}\left(\bar{Y}_{jb}(1)-\overline{\bm{x}}_{jb}\bm{\gamma}\right)}{\sum_{j=1}^{m_b}w_{jb}} - \frac{\sum_{j=1}^{m_b}w_{jb}\left(\bar{Y}_{jb}(0)-\overline{\bm{x}}_{jb}\bm{\gamma}\right)}{\sum_{j=1}^{m_b}w_{jb}} = \barbarY_b(1) - \barbarY_b(0).
\end{align*}
\end{proof}

The variance in the denominator of our asymptotic expression simplifies as follows:
\begin{align*}
\text{Var}(\hat{D_b}) =&  \frac{S^2_{D,b}(1)}{m^1} + \frac{S^2_{D,b}(0)}{m^0} - \frac{S^2(D_b)}{m}\\
=&  \frac{\frac{1}{m_b-1}\sum_{j=1}^{m_b}D_{jb}^2(1)}{m_b^1} + \frac{\frac{1}{m_b-1}\sum_{j=1}^{m_b}D_{jb}^2(0)}{m_b^0} - \frac{\frac{1}{m_b-1}\sum_{j=1}^{m_b}(D_{jb}(1) - D_{jb}(0))^2}{m_b}\\
=&  \frac{\frac{1}{m_b-1}\sum_{j=1}^{m_b}\frac{w_{jb}^2}{\wbar_b^2}\left(\bar{Y}_{jb}(1)-\barbarY_b(1)-\left(\overline{\bm{x}}_{jb}-\xbarbar_b\right)\bm{\gamma}\right)^2}{m_b^1}\\
& + \frac{\frac{1}{m_b-1}\sum_{j=1}^{m_b}\frac{w_{jb}^2}{\wbar_b^2}\left(\bar{Y}_{jb}(0)-\barbarY_b(0)-\left(\overline{\bm{x}}_{jb} - \xbarbar_b\right)\bm{\gamma}\right)^2}{m_b^0}\\
& - \frac{\frac{1}{m_b-1}\sum_{j=1}^{m_b}\frac{w_{jb}^2}{\wbar_b^2}\left(\bar{Y}_{jb}(1)-\bar{Y}_{jb}(0) - (\barbarY_b(1)-\barbarY_b(0))\right)^2}{m_b}\\
=&  \frac{\frac{1}{m_b-1}\sum_{j=1}^{m_b}\frac{w_{jb}^2}{\wbar_b^2}\left(\bar{Y}_{jb}(1)-\barbarY_b(1)-\tilde{\overline{\bm{x}}}_{jb} \bm{\gamma}\right)^2}{m_b^1} + \frac{\frac{1}{m_b-1}\sum_{j=1}^{m_b}\frac{w_{jb}^2}{\wbar_b^2}\left(\bar{Y}_{jb}(0)-\barbarY_b(0)-\tilde{\overline{\bm{x}}}_{jb} \bm{\gamma}\right)^2}{m_b^0}\\
& - \frac{\frac{1}{m_b-1}\sum_{j=1}^{m_b}\frac{w_{jb}^2}{\wbar_b^2}\left(\bar{Y}_{jb}(1)-\bar{Y}_{jb}(0) - (\barbarY_b(1)-\barbarY_b(0))\right)^2}{m_b}.
\end{align*}

\subsubsection{Joint asymptotic normality with known $\bm{\gamma}$}\label{subsec:covs_asy_norm_know_g_no_int_joint}

We now examine the joint convergence of $\tilde{\bm{\beta}}_{1}$.
We see that each element of $\tilde{\bm{\beta}}_{1}$, $\tilde{\beta}_{1b}$, is independent.
Thus, we have the characteristic function for $A_b = \frac{\tilde{\beta}_{1b} - \left(\barbarY_b(1) - \barbarY_b(0)\right)}{\sqrt{\text{Var}(\hat{D}_b)}}$, which we denote $\phi_b(t)$, converges as follows:
\[\phi_b(t) \xrightarrow[]{} e^{-t^2/2}\]
by the equivalency of convergence in distribution and point-wise convergence of characteristic functions and using Corollary~\ref{cor:known_gamma_no_int}. 

Now take any linear combination $L$, defined as
\[L = \sum_{b=1}^h s_b A_b.\]
Then the characteristic function of $L$ is
\begin{align*}
\phi_L(t) &= E\left[e^{itL}\right]\\
&= E\left[e^{it\sum_{b=1}^h s_b A_b}\right]\\
&= E\left[\prod_{b=1}^he^{it s_b A_b}\right]\\
&= \prod_{b=1}^hE\left[e^{it s_b A_b}\right]\\
&= \prod_{b=1}^h\phi_b(s_bt)\\
&\xrightarrow[]{} \prod_{b=1}^he^{-s_b^2t^2/2}\\
&=e^{-(\sum_{b=1}^hs_b^2)t^2/2}.
\end{align*}

Hence, $L \xrightarrow[]{d} N(0, \sum_{b=1}^hs_b^2)$.
Therefore, by the Cramer-Wold device, 
\begin{align*}
\begin{pmatrix}
 \frac{\tilde{\beta}_{11} - \left(\barbarY_1(1) - \barbarY_1(0)\right)}{\sqrt{\text{Var}(\hat{D}_1)}}\\
 \vdots\\
  \frac{\tilde{\beta}_{1h} - \left(\barbarY_h(1) - \barbarY_h(0)\right)}{\sqrt{\text{Var}(\hat{D}_h)}}
\end{pmatrix}
\xrightarrow[]{d} N(\bm{0}_h, \bm{I}_h).
\end{align*}

\subsubsection{Theorem 1: Asymptotic normality with estimated $\hat{\bm{\gamma}}$}\label{asy_norm_cov_no_int}

We now move to our primary result, Theorem 1.
To obtain this result, we first find asymptotic normality results for $\hat{\beta}_{1,b} = \barbary_b(1) - \barbary_b(0) - \left(\xbarT_b-\xbarC_b\right)\hat{\bm{\gamma}}$ for a single block $b$.
To do this, we need to show $\hat{\beta}_{1,b}$ has the same asymptotic distribution as $\tilde{\beta}_{1,b}= \barbary_b(1) - \barbary_b(0) - \left(\xbarT_b-\xbarC_b\right)\bm{\gamma}$, as done in \citet{li2017general} for the unweighted case.
Following that paper, we aim to show that the difference is order $o_p(m_b^{-1/2})$.

First note that we can use results from Section~\ref{subsec:asy_one_ratio} to put a convergence rate on $\xbarT_b-\xbarC_b$.
Convergence in probability of each element of $\xbarT_b$ and $\xbarC_b$ implies convergence of the entire vector.
Hence we can look at one entry of $\xbarT_b$ and $\xbarC_b$ at a time (since we can use component-wise convergence in probability).
For the $k$th component, let $C_{jb, k}(t) = \left[\overline{\bm{x}}_{jb}\right]_k$, noting that this does not change under treatment or control.
Then under the conditions of Lemma~\ref{lem:main_res} we have asymptotic normality results for each of the components of $\xbarT_b$ and $\xbarC_b$.
Assuming a limiting value on the variance, this in turn means that we have, for the $k$th component, $\left[\xbarT_b-\xbarbar_b\right]_k = O_p(m_b^{-1/2})$ and $\left[\xbarC_b-\xbarbar_b\right]_k  = O_p(m_b^{-1/2})$.

Then we have 
\begin{align*}
\left[\xbarT_b-\xbarC_b\right]_k&= \left[\xbarT_b - \xbarbar_b-\left(\xbarC_b - \xbarbar_b\right)\right]_k\\
&=O_p(m_b^{-1/2}).
\end{align*}

Now note that
\begin{align*}
\hat{\beta}_{1,b} &= \barbary_b(1) - \barbary_b(0) - \left(\xbarT_b-\xbarC_b\right)\hat{\bm{\gamma}}\\
&= \frac{\sum_{j=1}^{m_b}T_{jb}w_{jb}\left(\bar{y}_{jb}-\overline{\bm{x}}_{jb}\bm{\gamma}\right)}{\sum_{j=1}^{m_b}T_{jb}w_{jb}} - \frac{\sum_{j=1}^{m_b}(1-T_{jb})w_{jb}\left(\bar{y}_{jb}-\overline{\bm{x}}_{jb}\bm{\gamma}\right)}{\sum_{j=1}^{m_b}(1-T_{jb})w_{jb}} - \left(\xbarT_b-\xbarC_b\right)\left(\hat{\bm{\gamma}}-\bm{\gamma}\right).
\end{align*}

From the limiting value assumptions in Section~\ref{subsec:cov_cons_no_int}, $\hat{\bm{\gamma}}-\bm{\gamma} \xrightarrow[]{p} 0$ and so $\left(\xbarT_b-\xbarC_b\right)\left(\hat{\bm{\gamma}}-\bm{\gamma}\right) = o_p(m_b^{-1/2})$.
This means that $\hat{\beta}_{1,b}$  has the same asymptotic distribution as $\tilde{\beta}_{1b}$ and so we can use Corollary~\ref{cor:known_gamma_no_int}.

In Theorem 1, we assume the conditions of Corollary~\ref{cor:known_gamma_no_int}, the conditions of Lemma~\ref{lem:main_res} applied to each of the components of $\xbarT_b$ and $\xbarC_b$ as well as limiting values on the variance expression of each component.
Also assume limiting values on the following variance expressions:
\begin{align*}
\bm{S}^2_{\bm{x}, b} &= \frac{1}{m_b}\sum_{j=1}^{m_b}\sum_{i=1}^{n_{jb}}w_{ijb}\left(\bm{x}_{ijb}-\xbarbar_b\right)'\left(\bm{x}_{ijb}-\xbarbar_b\right)\\
\bm{S}^2_{\bm{x}, Y, b}(t) &= \frac{1}{m_b}\sum_{j=1}^{m_b}\sum_{i=1}^{n_{jb}}w_{ijb}\bm{x}_{ijb}'Y_{ijb}(t)-\xbarbar_b'\overline{wY(t)}_b\\
\bm{S}^2_{\bm{x}Y, b}(t) &= \frac{1}{m_b}\sum_{j=1}^{m_b}\sum_{i=1}^{n_{jb}}\left(w_{ijb}\bm{x}_{ijb}'Y_{ijb}(t)-\overline{wxY(t)}_b\right)^2
\text{for $t \in \{0, 1\}$ with}&\\
\overline{wxY(t)}_b & =  \frac{1}{m_b}\sum_{j=1}^{m_b}\sum_{i=1}^{n_{jb}}w_{ijb}\bm{x}_{ijb}'Y_{ijb}(t).
\end{align*}
Then we have the result of Theorem 1,
\begin{align*}
\frac{\hat{\beta}_{1,b} - \left(\barbarY_b(1) - \barbarY_b(0)\right)}{\sqrt{\text{Var}(\hat{D}_b)}} &\xrightarrow[]{d} N(0, 1).
\end{align*}
with 
\begin{align*}
\text{Var}(\hat{D_b}) 
=&  \frac{\frac{1}{m_b-1}\sum_{j=1}^{m_b}\frac{w_{jb}^2}{\wbar_b^2}\left(\bar{Y}_{jb}(1)-\barbarY_b(1)-\tilde{\overline{\bm{x}}}_{jb} \bm{\gamma}\right)^2}{m_b^1} + \frac{\frac{1}{m_b-1}\sum_{j=1}^{m_b}\frac{w_{jb}^2}{\wbar_b^2}\left(\bar{Y}_{jb}(0)-\barbarY_b(0)-\tilde{\overline{\bm{x}}}_{jb} \bm{\gamma}\right)^2}{m_b^0}\\
& - \frac{\frac{1}{m_b-1}\sum_{j=1}^{m_b}\frac{w_{jb}^2}{\wbar_b^2}\left(\bar{Y}_{jb}(1)-\bar{Y}_{jb}(0) - (\barbarY_b(1)-\barbarY_b(0))\right)^2}{m_b}.
\end{align*}

\subsubsection{Joint asymptotic normality with estimated $\hat{\bm{\gamma}}$}\label{asy_norm_cov_no_int_joint}

We now investigate joint asymptotic normality results for the vector $\hat{\bm{\beta}}_{1}$.
Each element of $\hat{\bm{\beta}}_{1}$, $\hat{\beta}_{1,b}$, is dependent because of the shared $\hat{\gamma}$ term.

With block-covariate interactions, each $\hat{\beta}_{1,b}$ is independent and thus the joint asymptotic result is immediate.

Without these interactions, we first define $C_b = \frac{\hat{\beta}_{1,b} - \left(\barbarY_b(1) - \barbarY_b(0)\right)}{\sqrt{\text{Var}(\hat{D}_b)}}$.
Based on results from Section~\ref{asy_norm_cov_no_int}, we have that $C_b$ converges to the same distribution that 
\[A_b = \frac{\tilde{\beta}_{1b} - \left(\barbarY_b(1) - \barbarY_b(0)\right)}{\sqrt{\text{Var}(\hat{D}_b)}}\]
converges to.
This implies that the linear combination
\[S = \sum_{b=1}^h s_b C_b\]
has the same asymptotic distribution as
\[L = \sum_{b=1}^h s_b A_b.\]
In particular,
\[S = \sum_{b=1}^h s_b\frac{\hat{\beta}_{1,b} - \left(\barbarY_b(1) - \barbarY_b(0)\right)}{\sqrt{\text{Var}(\hat{D}_b)}} = \sum_{b=1}^h s_bA_b - \sum_{b=1}^h s_b\left(\xbarT_b-\xbarC_b\right)\left(\hat{\bm{\gamma}}-\bm{\gamma}\right)\]
and the last sum is a finite sum of terms that are $o_p(m_b^{-1/2})$.



Hence, $S \xrightarrow[]{d} N(0, \sum_{b=1}^hs_b^2)$.
Therefore, by the Cramer-Wold device, 
\begin{align*}
\begin{pmatrix}
 \frac{\hat{\beta}_{1,1} - \left(\barbarY_1(1) - \barbarY_1(0)\right)}{\sqrt{\text{Var}(\hat{D}_1)}}\\
 \vdots\\
  \frac{\hat{\beta}_{1,h} - \left(\barbarY_h(1) - \barbarY_h(0)\right)}{\sqrt{\text{Var}(\hat{D}_h)}}
\end{pmatrix}
\xrightarrow[]{d} N(\bm{0}_h, \bm{I}_h).
\end{align*}

\subsection{Restricted model}
In this section, we explore the impact estimator from a weighted least squares regression with additional covariate adjustment and without interactions between block indicators and treatment indicators.
We now have a single treatment effect estimator that aggregates across blocks.
We first find the closed form for this estimator as 
$$\hat{\beta}_1= \frac{\sum_{b=1}^h\frac{w_b^1w_b^0}{w_b}\left(\barbary_b(1)-\barbary_b(0)\right)}{\sum_{b=1}^h\frac{w^1_bw^0_b}{w_b}} -  \frac{\sum_{b=1}^h\frac{w_b^1w_b^0}{w_b}\left(\xbarbar_b^1-\xbarbar_b^0\right)}{\sum_{b=1}^h\frac{w^1_bw^0_b}{w_b}}\hat{\bm{\gamma}} . $$
We then find consistency results and show an asymptotic normality result.
As before, we first assume a known $\bm{\gamma}$ to focus on the pooling of individual treatment impacts across blocks and then extend to an estimated $\hat{\bm{\gamma}}$.

\subsubsection{The estimator}
We now examine the model with no interactions between treatment and blocks .
In this case, there is a single treatment effect estimator, $\hat{\beta}_1$ for all blocks.
We have $\bm{z}_{ijb} = (\tilde{T}_{jb}, S_{ij1}, \dots, S_{ijh}, \tilde{\bm{x}}_{ijb})$.
The estimated parameter vector is
\begin{align*}
\begin{pmatrix}
\hat{\beta}_1\\
\hat{\bm{\beta}}_0\\
\hat{\bm{\gamma}}
\end{pmatrix} = 
\left[\left(\sum_{b=1}^{h}\sum_{j=1}^{m_b}\sum_{i=1}^{n_{jb}}w_{ijb}\bm{z}'_{ijb}\bm{z}_{ijb}\right)^{-1}\sum_{b=1}^{h}\sum_{j=1}^{m_b}\sum_{i=1}^{n_{jb}}w_{ijb}\bm{z}'_{ijb}y_{ijb}\right].
\end{align*}

Using the same techniques as in Section~\ref{subsec:cov_no_int_est} we find the following:

\noindent \textbf{Result:}
\[\hat{\beta}_1= \frac{\sum_{b=1}^h\frac{w_b^1w_b^0}{w_b}\left(\barbary_b(1)-\barbary_b(0)\right)}{\sum_{b=1}^h\frac{w^1_bw^0_b}{w_b}} -  \frac{\sum_{b=1}^h\frac{w_b^1w_b^0}{w_b}\left(\xbarbar_b^1-\xbarbar_b^0\right)}{\sum_{b=1}^h\frac{w^1_bw^0_b}{w_b}}\hat{\bm{\gamma}}.\]

Derivations available upon request.

\subsubsection{Consistency}
\textbf{Result:} 
\begin{align*}
\hat{\beta}_1 &= \frac{\sum_{b=1}^h\frac{w_b^1w_b^0}{w_b}\left(\barbary_b(1)-\barbary_b(0)\right)}{\sum_{b=1}^h\frac{w^1_bw^0_b}{w_b}} -  \frac{\sum_{b=1}^h\frac{w_b^1w_b^0}{w_b}\left(\xbarbar_b^1-\xbarbar_b^0\right)}{\sum_{b=1}^h\frac{w^1_bw^0_b}{w_b}} \hat{\bm{\gamma}}\\
& \xrightarrow[]{p}\frac{\sum_{b=1}^hq_bp_b(1-p_b)\left(\mu^*_b(1)-\mu^*_b(0)\right)}{\sum_{b=1}^hq_bp_b(1-p_b)\omega_b}
\end{align*}

Using results and limiting values given in Section~\ref{subsec:cons_no_cov} and Section~\ref{subsec:cov_cons_no_int} and also $m_b/m \xrightarrow[]{p} q_b$ ($0<q_b<1$), we have 
\begin{align*}
\begin{pmatrix}
\hat{\beta}_1\\
\hat{\bm{\beta}}_0\\
\hat{\bm{\gamma}}
\end{pmatrix}
=&
\begin{pmatrix}
\frac{1}{m}\sum_{b=1}^h\frac{w^1_bw^0_b}{w_b} & 0 & \cdots & 0 & \frac{1}{m}\sum_{b=1}^h\frac{w_b^1w_b^0}{w_b}\left(\xbarbar_b^1-\xbarbar_b^0\right) \\
0 &\frac{1}{m_1}w_1 & \cdots & 0 & \bm{0}_v\\
\vdots & \vdots & \ddots & \vdots & \vdots\\
0 & 0 & \cdots & \frac{1}{m_h}w_h & \bm{0}_v\\
\frac{1}{m}\sum_{b=1}^h\frac{w_b^1w_b^0}{w_b}\left(\xbarbar_b^{1'}-\xbarbar_b^{0'}\right) & \bm{0}_v' & \cdots & \bm{0}_v' &\frac{1}{m}\sum_{b=1}^h\sum_{j=1}^{m_b}\sum_{i=1}^{n_{jb}}w_{ijb}\tilde{\bm{x}}_{ijb}'\tilde{\bm{x}}_{ijb}
\end{pmatrix}^{-1}\\
&\times\begin{pmatrix}
\frac{1}{m}\sum_{b=1}^h\frac{w_b^1w_b^0}{w_b}\left(\barbary_b(1)-\barbary_b(0)\right)\\
\frac{1}{m_1}\left(w_1^1\barbary_1(1)+w_1^0\barbary_1(0)\right)\\
\vdots\\
\frac{1}{m_h}\left(w_h^1\barbary_h(1)+w_h^0\barbary_h(0)\right)\\
\frac{1}{m}\sum_{b=1}^h\sum_{j=1}^{m_b}\sum_{i=1}^{n_{jb}}w_{ijb}\tilde{\bm{x}}_{ijb}'\bar{y}_{ijb}
\end{pmatrix}\\
\xrightarrow[]{p}&
\begin{pmatrix}
\sum_{b=1}^hq_bp_b(1-p_b)\omega_b & 0 & \cdots & 0 & 0 \\
0 &\omega_1 & \cdots & 0 & \bm{0}_v\\
\vdots & \vdots & \ddots & \vdots & \vdots\\
0 & 0 & \cdots & \omega_h & \bm{0}_v\\
0 & \bm{0}_v' & \cdots & \bm{0}_v' &\sum_{b=1}^hq_b\bm{\Sigma}^2_{\bm{x}, b}
\end{pmatrix}^{-1}\\
&\times\begin{pmatrix}
\sum_{b=1}^hq_bp_b(1-p_b)\left(\mu^*_b(1)-\mu^*_b(0)\right)\\
p_1\mu_1(1)+(1-p_1)\mu_1(0)\\
\vdots\\
p_h\mu_h(1)+(1-p_h)\mu_h(0)\\
\sum_{b=1}^hp_bq_b\bm{\Sigma}^2_{\bm{x}, Y(1), b} + \sum_{b=1}^h(1-p_b)q_b\bm{\Sigma}^2_{\bm{x}, Y(0), b}
\end{pmatrix}\\
=& \begin{pmatrix}
\frac{\sum_{b=1}^hq_bp_b(1-p_b)\left(\mu^*_b(1)-\mu^*_b(0)\right)}{\sum_{b=1}^hq_bp_b(1-p_b)\omega_b}\\
\frac{1}{\omega_1}\left(p_1\mu_1(1)+(1-p_1)\mu_1(0)\right)\\
\vdots\\
\frac{1}{\omega_h}\left(p_h\mu_h(1)+(1-p_h)\mu_h(0)\right)\\
\bm{\Gamma}
\end{pmatrix}
\end{align*}
where
\begin{align*}
\bm{\Gamma} =& \left(\sum_{b=1}^hq_b\bm{\Sigma}^2_{\bm{x}, b}\right)^{-1}
\left(\sum_{b=1}^hp_bq_b\bm{\Sigma}^2_{\bm{x}, Y(1), b} + \sum_{b=1}^h(1-p_b)q_b\bm{\Sigma}^2_{\bm{x}, Y(0), b}\right).
\end{align*}

\subsubsection{Asymptotic normality with known $\bm{\gamma}$}
Let $\tilde{q}_b = m_b/m$.
Following Section~\ref{subsec:covs_asy_norm_know_g_no_int_joint}, we want to start by finding the asymptotic distribution of 
\begin{align*}
\tilde{\beta}_1 &= \frac{\sum_{b=1}^h\frac{w_b^1w_b^0}{w_b}\left(\barbary_b(1)-\barbary_b(0)\right)}{\sum_{a=1}^h\frac{w^1_aw^0_a}{w_a}} -  \frac{\sum_{b=1}^h\frac{w_b^1w_b^0}{w_b}\left(\xbarbar_b^1-\xbarbar_b^0\right)}{\sum_{a=1}^h\frac{w^1_aw^0_a}{w_a}} \bm{\gamma}\\
&= \sum_{b=1}^h\frac{\tilde{q}_bp_b\wTbar_b(\wbar_b- p_b\wTbar_b)/\wbar_b}{ \sum_{a=1}^h\tilde{q}_ap_a\wTbar_a(\wbar_a - p_a\wTbar_a)/\wbar_a}\left(\barbary_b(1)-\barbary_b(0) - \left(\xbarbar_b^1-\xbarbar_b^0\right)\bm{\gamma}\right).
\end{align*}
Further denote
\begin{align*}
\beta_1 &= \sum_{b=1}^h\frac{\tilde{q}_bp_b(1-p_b)\wbar_b}{ \sum_{a=1}^h\tilde{q}_ap_a(1-p_a)\wbar_a}\left(\barbarY_b(1)-\barbarY_b(0)\right).
\end{align*}

Let 
\[U_{jb}(t) = w_{jb}\left(\bar{Y}_{jb}(t) - \overline{\bm{x}}_{jb}\bm{\gamma}\right),\quad \bar{U}_b(t) = \frac{1}{m_b}\sum_{j=1}^{m_b}U_{jb}(t),\quad \text{and} \quad \bar{u}_b(t) = \frac{1}{m_b^t}\sum_{j: T_{jb}=t}U_{jb}(t).\]
Then we can write
\begin{align*}
\tilde{\beta}_1 &= \sum_{b=1}^h\frac{\tilde{q}_bp_b\wTbar_b(\wbar_b- p_b\wTbar_b)/\wbar_b}{ \sum_{a=1}^h\tilde{q}_ap_a\wTbar_a(\wbar_a - p_a\wTbar_a)/\wbar_a}\left(\frac{\bar{u}_b(1)}{\wTbar}-\frac{\bar{u}_b(0)}{\wCbar}\right).
\end{align*}
It is useful to rewrite this in terms of the fewest possible random variables, so we rewrite $\wTbar_b = (\wbar_b-p_b\wTbar_b)/(1-p_b)$ everywhere as follows:
\begin{align*}
\tilde{\beta}_1 &= \sum_{b=1}^h\frac{\tilde{q}_bp_b\wTbar_b(\wbar_b- p_b\wTbar_b)/\wbar_b}{ \sum_{a=1}^h\tilde{q}_ap_a\wTbar_a(\wbar_a - p_a\wTbar_a)/\wbar_a}\left(\frac{\bar{u}_b(1)}{\wTbar}-\frac{(1-p_b)\bar{u}_b(0)}{\wbar_b-p_b\wTbar_b}\right).
\end{align*}

Let, for $t \in \{0, 1\}$ and $z \in \{w, U\}$,
\[a_{z,b}(t) = \text{max}_{1 \leq j \leq m_b}\left(z_{j,b}(t)-\bar{z}_{b}(t)\right)^2\]
and
\[v_{z,b}(t) = \frac{1}{m_b-1}\sum_{j=1}^{m_b}\left(z_{j,b}(t)-\bar{z}_{b}(t)\right)^2,\]
noting that $w_{jb}(t) = w_{jb}$.
Let $\bm{t} = (\bar{u}_1(1), \bar{u}_1(0), \wTbar_1, \cdots, \bar{u}_h(1), \bar{u}_h(0), \wTbar_h)$ and \\
$\bm{T} = (\bar{U}_1(1), \bar{U}_1(0), \wbar_1, \cdots, \bar{U}_h(1), \bar{U}_h(0), \wbar_h)$.

\begin{theorem}\label{theor:comb_true_gam}
Let us assume the following conditions:
\begin{enumerate}[(a)]
\item As $m \to \infty$, 
\[\text{max}_{1\leq b \leq h}\text{max}_{z \in \{w, U\}}\text{max}_{t \in \{0, 1\}}\frac{a_{z,b}(t)}{p_b(1-p_b)m_bv_{z,b}(t)}  \to 0.\]
\item The correlation matrix of $\bm{t}$ has a limiting value $\bm{\Sigma}$.
\item We have limiting values on the following variance expressions: $m\text{Var}(\wTbar_b)$ and $m\text{Var}(\bar{u}_b(z))$ for all $b \in \{1,\dots,h\}$ and $z \in \{0, 1\}$.
\item $\bar{U}_b(1) \neq 0$ or $ \bar{U}_b(0) \neq 0$ for some $b$.
\end{enumerate}
Then we have
\begin{align*}
\frac{\tilde{\beta}_1 - \beta_1}{\sqrt{\text{Var}\left(\tilde{\beta}_1\right)}} \xrightarrow[]{d} \text{N}(0,1)
\end{align*}
where
\begin{align*}
\text{Var}\left(\tilde{\beta}_1\right)  = \sum_{b=1}^h&\frac{1}{m_b(m_b-1)}\sum_{j=1}^{m_b}\Bigg( \frac{\tilde{q}_bp_b(1-p_b)(1-2p_b)}{\sqrt{p_b(1-p_b)}\left(\sum_{a=1}^h\tilde{q}_ap_a(1-p_a)\wbar_a\right)}\left(\beta_{1,b} - \beta_1\right)(w_{jb} - \wbar_b)\\
& + \sqrt{\frac{1-p_b}{p_b}}\frac{\tilde{q}_bp_b(1-p_b)\wbar_b}{\sum_{a=1}^h\tilde{q}_ap_a(1-p_a)\wbar_a}\left(\frac{U_{jb}(1)}{\wbar_b} -\frac{w_{jb}}{\wbar_b} \frac{\bar{U}_b(1)}{\wbar_b}\right)\\
& - \sqrt{\frac{p_b}{1-p_b}}\frac{\tilde{q}_bp_b(1-p_b)\wbar_b}{\sum_{a=1}^h\tilde{q}_ap_a(1-p_a)\wbar_a}\left(\frac{U_{jb}(0)}{\wbar_b} - \frac{w_{jb}}{\wbar_b}\frac{\bar{U}_b(0)}{\wbar_b}\right) \Bigg)^2\\
 = \sum_{b=1}^h&\frac{1}{m_b(m_b-1)}\sum_{j=1}^{m_b}\Bigg( \frac{\tilde{q}_bp_b(1-p_b)(1-2p_b)}{\sqrt{p_b(1-p_b)}\left(\sum_{a=1}^h\tilde{q}_ap_a(1-p_a)\wbar_a\right)}\left(\beta_{1,b} - \beta_1\right)(w_{jb} - \wbar_b)\\
& + \sqrt{\frac{1-p_b}{p_b}}\frac{\tilde{q}_bp_b(1-p_b)\wbar_b}{\sum_{a=1}^h\tilde{q}_ap_a(1-p_a)\wbar_a}\left(\frac{w_{jb}\left(\bar{Y}_{jb}(1) - \overline{\bm{x}}_{jb}\bm{\gamma}\right)}{\wbar_b} -\frac{w_{jb}\left(\barbarY_b(1) - \xbarbar_b\bm{\gamma}\right)}{\wbar_b} \right)\\
& - \sqrt{\frac{p_b}{1-p_b}}\frac{\tilde{q}_bp_b(1-p_b)\wbar_b}{\sum_{a=1}^h\tilde{q}_ap_a(1-p_a)\wbar_a}\left(\frac{w_{jb}\left(\bar{Y}_{jb}(0) - \overline{\bm{x}}_{jb}\bm{\gamma}\right)}{\wbar_b} - \frac{w_{jb}\left(\barbarY_b(0) - \xbarbar_b\bm{\gamma}\right)}{\wbar_b}\right) \Bigg)^2.
\end{align*}
\end{theorem}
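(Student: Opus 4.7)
The plan is to mirror the two-step structure used for Lemma \ref{lemma:two_ratio_main}: first construct a linear approximation $\hat{D}$ to $\tilde{\beta}_1 - \beta_1$ in the deviations of $\bar{u}_b(t)$ from $\bar{U}_b(t)$ and of $\wTbar_b$ from $\wbar_b$; then apply a joint finite-population central limit theorem to $\hat{D}$ using conditions (a)--(c); and finally transfer the conclusion to $\tilde{\beta}_1$ via Slutsky's theorem after showing $\tilde{\beta}_1 - \beta_1 - \hat{D} = o_p(m^{-1/2})$.

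View $\tilde{\beta}_1$ as a smooth function $g$ of the $3h$-vector $\bm{t}$ with population counterpart $\bm{T}$; direct substitution verifies $g(\bm{T}) = \beta_1$. The linearization is $\hat{D} = (\bm{t}-\bm{T})^{\top}\nabla g(\bm{T})$. Expanding $\hat{D}$ back in terms of per-cluster quantities, the derivative with respect to $\bar{u}_b(1)$ evaluated at $\bm{T}$ produces the second summand inside the variance display, the derivative with respect to $\bar{u}_b(0)$ produces the third summand, and the derivative with respect to $\wTbar_b$ produces the first summand involving $(\beta_{1,b} - \beta_1)(w_{jb} - \wbar_b)$. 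The factor $(1-2p_b)$ in that first summand arises because $\wTbar_b$ enters both the numerator $p_b\wTbar_b(\wbar_b - p_b\wTbar_b)/\wbar_b$ and, through self-normalization, the denominator of the weighted average across blocks; the appearance of $(\beta_{1,b} - \beta_1)$ rather than $\beta_{1,b}$ reflects the standard mean-subtraction identity for ratios whose normalizer depends on the same random quantities.

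With $\hat{D}$ written as a block-decomposable linear combination of centered cluster-level terms, the finite-population variance follows from Theorem 3 of \citet{li2017general} applied independently within each block (justified because assignment is independent across blocks), producing exactly the block-by-block sum of squares in the stated $\text{Var}(\tilde{\beta}_1)$. Condition (a) supplies the Lindeberg-type bound required by Theorem 4 of \citet{li2017general}, which, combined with condition (b) and the Cramer-Wold device as in Section \ref{subsec:covs_asy_norm_know_g_no_int_joint}, delivers $\hat{D}/\sqrt{\text{Var}(\hat{D})} \xrightarrow{d} N(0,1)$, while condition (c) pins down the limiting scale. Each $\wTbar_b - \wbar_b$ and each $\bar{u}_b(t) - \bar{U}_b(t)$ is $O_p(m^{-1/2})$ by Lemma \ref{lem:main_res} applied to $C_{jb}(t) = w_{jb}$ and $C_{jb}(t) = \bar{Y}_{jb}(t) - \overline{\bm{x}}_{jb}\bm{\gamma}$, respectively, so the second-order remainder from expanding the ratio-of-sums is a finite sum of products of $O_p(m^{-1/2})$ terms, hence $o_p(m^{-1/2})$; condition (d) keeps the denominator bounded away from zero in probability, so the expansion is valid.

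The main obstacle will be the gradient computation that verifies the three coefficient factors $(1-2p_b)/\sqrt{p_b(1-p_b)}$, $\sqrt{(1-p_b)/p_b}$, and $\sqrt{p_b/(1-p_b)}$. The last two come naturally from the finite-population variance scaling $(1-f_b^t)/m_b^t$ encountered in Sections \ref{subsec:asy_one_ratio_main} and \ref{subsec:asy_two_ratios_main}, but the first requires tracking how the derivative of $p_b\wTbar_b(\wbar_b - p_b\wTbar_b)/\wbar_b$ combines with the contribution from the normalization, and how the leftover difference $\beta_{1,b} - \beta_1$ emerges from the identity that the normalized block weights $\tilde{q}_b p_b(1-p_b)\wbar_b / \sum_a \tilde{q}_a p_a(1-p_a)\wbar_a$ sum to one. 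Once these three coefficients are verified, the remaining machinery is identical to that already developed for the two-ratio case.
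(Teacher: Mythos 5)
Your proposal follows essentially the same route as the paper: the paper also treats $\tilde{\beta}_1$ as a smooth function $g(\bm{t})$ of the $3h$ block-level means, invokes Theorem 4 of Li and Ding (2017) for joint normality of the standardized components, and then applies the finite-population delta method of Pashley et al.\ (Theorem 2 of that paper) --- which is exactly your hand-rolled linearization-plus-negligible-remainder argument --- with the same gradient computation producing the three coefficient factors and the $(\beta_{1,b}-\beta_1)$ term. The only quibble is your reading of condition (d): the normalizing denominator is bounded away from zero simply because the weights are positive, and the paper instead uses the assumption that $\bar{U}_b(1)\neq 0$ or $\bar{U}_b(0)\neq 0$ for some $b$ to license the application of the delta-method theorem (ruling out a degenerate limit), not to validate the expansion.
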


\begin{proof}
By Theorem 4 of \citet{li2017general}, if as $m \to \infty$
\[\text{max}_{1\leq b \leq h}\text{max}_{z \in \{w, U\}}\text{max}_{t \in \{0, 1\}}\frac{a_{z,b}(t)}{p_b(1-p_b)m_bv_{z,b}(t)}  \to 0\]
and the correlation matrix of $\bm{t}$ has a limiting value $\bm{\Sigma}$, then
\[\left(\frac{\bar{u}_1(1)-\bar{U}_1(1)}{\sqrt{\text{Var}(\bar{u}_1(1))}}, \frac{\bar{u}_1(0)-\bar{U}_1(0)}{\sqrt{\text{Var}(\bar{u}_1(0))}}, \frac{\wTbar_1-\wbar_1}{\sqrt{\text{Var}(\wTbar_1)}}, \cdots,  \frac{\bar{u}_h(0)-\bar{U}_h(0)}{\sqrt{\text{Var}(\bar{u}_h(0))}}, \frac{\wTbar_h-\wbar_h}{\sqrt{\text{Var}(\wTbar_h)}}\right) \xrightarrow[]{d} N(\bm{0},\bm{\Sigma}). \]
To use the delta method given in \citet{pashley2019delta}, we also require that $\bm{t}-\bm{T} \xrightarrow[]{p} 0$ (i.e., $\wTbar_b-\wbar_b \xrightarrow[]{p} 0$ and $\bar{u}_b(z)-\bar{U}_b(z) \xrightarrow[]{p} 0$ for all $b \in \{1,\dots,h\}$ and $z \in \{0, 1\}$).
This is satisfied by our assumption on limiting values of the variances (this can be seen directly from our prior results or Markov's inequality).

We can determine the variance and covariances by noting that blocks are independent of each other, but random variables within blocks are dependent.
We have 
\[\text{Var}\left(\bar{u}_b(t)\right) = \left(\frac{1}{m_b^t}-\frac{1}{m_b}\right)v_{z,U}(t),\]
\[\text{Var}\left(\wtbar_b \right) = \left(\frac{1}{m_b^t}-\frac{1}{m_b}\right)v_{z,w}(t),\]
\[\text{Cov}\left(\bar{u}_b(1), \bar{u}_b(0)\right) = -\frac{1}{m_b}\frac{1}{m_b-1}\sum_{j=1}^{m_b}\left(U_{jb}(1)-\bar{U}_b(1)\right)\left(U_{jb}(0)-\bar{U}_b(0)\right),\]
\[\text{Cov}\left(\bar{u}_b(1), \wTbar_b\right) =  \left(\frac{1}{m_b^1}-\frac{1}{m_b}\right)\frac{1}{m_b-1}\sum_{j=1}^{m_b}\left(U_{jb}(1)-\bar{U}_b(1)\right)\left(w_{jb}-\wbar_b\right),\]
and
\[\text{Cov}\left(\bar{u}_b(0), \wTbar_b\right) = -\frac{1}{m_b}\frac{1}{m_b-1}\sum_{j=1}^{m_b}\left(U_{jb}(0)-\bar{U}_b(0)\right)\left(w_{jb}-\wbar_b\right).\]

Now $g(\cdot): \mathbb{R}^{3h} \to \mathbb{R}$ takes our vector $\bm{t}$ and returns the estimator $\tilde{\beta}_1$.
As our weights are all positive and non-zero, this function is continuous and differential on the domain of $\bm{T}$.
We have
\begin{align*}
\nabla g(\bm{T}) &=
\begin{pmatrix}
\frac{\partial g(\bm{T})}{\partial \bar{u}_1(1)} &
\frac{\partial g(\bm{T})}{\partial \bar{u}_1(0)}&
\frac{\partial g(\bm{T})}{\partial \wTbar_1}&
\hdots&
\frac{\partial g(\bm{T})}{\partial \bar{u}_h(1)}&
\frac{\partial g(\bm{T})}{\partial \bar{u}_h(0)}&
\frac{\partial g(\bm{T})}{\partial \wTbar_h}
\end{pmatrix}^T.
\end{align*}
The partial derivatives are
\begin{align*}
\frac{\partial \hat{\beta}}{\partial \bar{u}_1(1)}\Big|_{\bm{T}} =& \frac{\tilde{q}_bp_b(1-p_b)\wbar_b}{\sum_{a=1}^h\tilde{q}_ap_a(1-p_a)\wbar_a}\frac{1}{\wbar_b}\\
\frac{\partial \hat{\beta}}{\partial\bar{u}_1(0)}\Big|_{\bm{T}} =& -\frac{\tilde{q}_bp_b(1-p_b)\wbar_b}{\sum_{a=1}^h\tilde{q}_ap_a(1-p_a)\wbar_a}\frac{1}{\wbar_b}\\
\frac{\partial \hat{\beta}}{\partial \wbar_b}\Big|_{\bm{T}} =& -\tilde{q}_bp_b(1-2p_b)\sum_{c=1}^h\frac{\tilde{q}_cp_c(1-p_c)\wbar_c}{\left(\sum_{a=1}^h\tilde{q}_ap_a(1-p_a)\wbar_a\right)^2}\left(\frac{\bar{U}_c(1)}{\wbar_c} - \frac{\bar{U}_c(0)}{\wbar_c}\right)\\
 &- \frac{\tilde{q}_bp_b(1-p_b)}{\sum_{a=1}^h\tilde{q}_ap_a(1-p_a)\wbar_a}\left(\frac{p_b}{1-p_b}\frac{\bar{U}_b(1)}{\wbar_b} + \frac{\bar{U}_b(0)}{\wbar_b} \right)\\
 =& -\frac{\tilde{q}_bp_b(1-2p_b)}{\sum_{a=1}^h\tilde{q}_ap_a(1-p_a)\wbar_a}\beta_1- \frac{\tilde{q}_bp_b(1-p_b)\wbar_b}{\sum_{a=1}^h\tilde{q}_ap_a(1-p_a)\wbar_a}\left(\frac{p_b}{1-p_b}\frac{\bar{U}_b(1)}{(\wbar_b)^2} + \frac{\bar{U}_b(0)}{(\wbar_b)^2} \right).
\end{align*}

We see that all of these derivatives are continuous and, assuming that we do not have $\bar{U}_b(0) = \bar{U}_b(0) =0$ for all $b$, we can therefore use the delta method result of Theorem 2 of \citet{pashley2019delta}, which gives
\begin{align*}
\frac{g(\bm{t}) - g(\bm{T})}{\sqrt{\left(\nabla g(\bm{T})\right)^T \bm{V}\bm{\Sigma}\bm{V}\nabla g(\bm{T})}} \xrightarrow[]{d} \text{N}(0,1)
\end{align*}
where
\begin{align*}
\bm{V} &= \begin{pmatrix}
\sqrt{\text{Var}(\bar{u}_1(1))} & 0 & \cdots & 0\\
0 & \sqrt{\text{Var}(\bar{u}_1(0))}  & \cdots & 0\\
\vdots & \vdots & \ddots & \vdots\\
0 & 0 & \cdots & \sqrt{\text{Var}(\wTbar_h)}
\end{pmatrix}.
\end{align*}

The denominator corresponds to the variance of $\tilde{\beta}_1$.
Let $\beta_{1,b} = (\bar{U}_b(1)-\bar{U}_b(0))/\wbar_b = \barbarY_b(1)-\barbarY_b(0)$.
We can now do the multiplication to get the variance, noting that symmetry can greatly simplify the calculations.
Replacing the limiting values of $\bm{\Sigma}$ with the sample values, we get
\begin{align*}
\text{Var}(\tilde{\beta}_1)
 = \sum_{b=1}^h&\frac{1}{m_b(m_b-1)}\sum_{j=1}^{m_b}\Bigg( -\frac{\sqrt{1-p_b}\tilde{q}_bp_b(1-2p_b)}{\sqrt{p_b}\left(\sum_{a=1}^h\tilde{q}_ap_a(1-p_a)\wbar_a\right)}\beta_1(w_{jb} - \wbar_b)\\
 &- \sqrt{\frac{1-p_b}{p_b}}\frac{\tilde{q}_bp_b(1-p_b)\wbar_b}{\sum_{a=1}^h\tilde{q}_ap_a(1-p_a)\wbar_a}\left(\frac{p_b}{1-p_b}\frac{\bar{U}_b(1)}{(\wbar_b)^2} + \frac{\bar{U}_b(0))}{(\wbar_b)^2} \right)(w_{jb} - \wbar_b)\\
& + \sqrt{\frac{1-p_b}{p_b}}\frac{\tilde{q}_bp_b(1-p_b)\wbar_b}{\sum_{a=1}^h\tilde{q}_ap_a(1-p_a)\wbar_a}\left(\frac{U_{jb}(1)}{\wbar_b} - \frac{\bar{U}_b(1)}{\wbar_b}\right)\\
& + \sqrt{\frac{p_b}{1-p_b}}\frac{\tilde{q}_bp_b(1-p_b)\wbar_b}{\sum_{a=1}^h\tilde{q}_ap_a(1-p_a)\wbar_a}\left(\frac{U_{jb}(0)}{\wbar_b} - \frac{\bar{U}_b(0)}{\wbar_b}\right) \Bigg)^2.
\end{align*}
We can rewrite this more digestibly in terms of block estimators as
\begin{align*}
\text{Var}(\tilde{\beta}_1)
 = \sum_{b=1}^h&\frac{1}{m_b(m_b-1)}\sum_{j=1}^{m_b}\Bigg( \frac{\tilde{q}_bp_b(1-p_b)(1-2p_b)}{\sqrt{p_b(1-p_b)}\left(\sum_{a=1}^h\tilde{q}_ap_a(1-p_a)\wbar_a\right)}\left(\beta_{1,b} - \beta_1\right)(w_{jb} - \wbar_b)\\
& + \sqrt{\frac{1-p_b}{p_b}}\frac{\tilde{q}_bp_b(1-p_b)\wbar_b}{\sum_{a=1}^h\tilde{q}_ap_a(1-p_a)\wbar_a}\left(\frac{w_{jb}\left(\bar{Y}_{jb}(1) - \overline{\bm{x}}_{jb}\bm{\gamma}\right)}{\wbar_b} -\frac{w_{jb}\left(\barbarY_b(1) - \xbarbar_b\bm{\gamma}\right)}{\wbar_b} \right)\\
& + \sqrt{\frac{p_b}{1-p_b}}\frac{\tilde{q}_bp_b(1-p_b)\wbar_b}{\sum_{a=1}^h\tilde{q}_ap_a(1-p_a)\wbar_a}\left(\frac{w_{jb}\left(\bar{Y}_{jb}(0) - \overline{\bm{x}}_{jb}\bm{\gamma}\right)}{\wbar_b} - \frac{w_{jb}\left(\barbarY_b(0) - \xbarbar_b\bm{\gamma}\right)}{\wbar_b}\right) \Bigg)^2.
\end{align*}
\end{proof}

\subsubsection{Theorem 2: Asymptotic normality with estimated $\hat{\bm{\gamma}}$}\label{asy_norm_comb_est}
Finally, to prove Theorem 2 we can follow the same proof as in Section~\ref{asy_norm_cov_no_int}.

Theorem 2 assume the conditions of Theorem~\ref{theor:comb_true_gam}, the conditions of Lemma~\ref{lem:main_res} applied to each of the components of $\xbarT_b$ and $\xbarC_b$, and a limiting value on the asymptotic variance for those components.
It also assumes limiting values on the following variance expressions:
\begin{align*}
\bm{S}^2_{\bm{x}, b} &= \frac{1}{m_b}\sum_{j=1}^{m_b}\sum_{i=1}^{n_{jb}}w_{ijb}\left(\bm{x}_{ijb}-\xbarbar_b\right)'\left(\bm{x}_{ijb}-\xbarbar_b\right)\\
\bm{S}^2_{\bm{x}, Y, b}(t) &= \frac{1}{m_b}\sum_{j=1}^{m_b}\sum_{i=1}^{n_{jb}}w_{ijb}\bm{x}_{ijb}'Y_{ijb}(t)-\xbarbar_b'\overline{wY(t)}_b\\
\bm{S}^2_{\bm{x}Y, b}(t) &= \frac{1}{m_b}\sum_{j=1}^{m_b}\sum_{i=1}^{n_{jb}}\left(w_{ijb}\bm{x}_{ijb}'Y_{ijb}(t)-\overline{wxY(t)}_b\right)^2\
\text{for $t \in \{0, 1\}$ with}&\\
\overline{wxY(t)}_b & =  \frac{1}{m_b}\sum_{j=1}^{m_b}\sum_{i=1}^{n_{jb}}w_{ijb}\bm{x}_{ijb}'Y_{ijb}(t).
\end{align*}
Then, we have the result of Theorem 2,
\begin{align*}
\frac{\hat{\beta}_{1} - \beta_1}{\sqrt{\text{Var}\left(\tilde{\beta}_1\right)}} &\xrightarrow[]{d} N(0, 1).
\end{align*}
with 
\begin{align*}
\text{Var}\left(\hat{\beta}_1\right)  = \sum_{b=1}^h&\frac{1}{m_b(m_b-1)}\sum_{j=1}^{m_b}\Bigg( \frac{\tilde{q}_bp_b(1-p_b)(1-2p_b)}{\sqrt{p_b(1-p_b)}\left(\sum_{a=1}^h\tilde{q}_ap_a(1-p_a)\wbar_a\right)}\left(\beta_{1,b} - \beta_1\right)(w_{jb} - \wbar_b)\\
& + \sqrt{\frac{1-p_b}{p_b}}\frac{\tilde{q}_bp_b(1-p_b)\wbar_b}{\sum_{a=1}^h\tilde{q}_ap_a(1-p_a)\wbar_a}\left(\frac{U_{jb}(1)}{\wbar_b} -\frac{w_{jb}}{\wbar_b} \frac{\bar{U}_b(1)}{\wbar_b}\right)\\
& + \sqrt{\frac{p_b}{1-p_b}}\frac{\tilde{q}_bp_b(1-p_b)\wbar_b}{\sum_{a=1}^h\tilde{q}_ap_a(1-p_a)\wbar_a}\left(\frac{U_{jb}(0)}{\wbar_b} - \frac{w_{jb}}{\wbar_b}\frac{\bar{U}_b(0)}{\wbar_b}\right) \Bigg)^2\\
 = \sum_{b=1}^h&\frac{1}{m_b(m_b-1)}\sum_{j=1}^{m_b}\Bigg( \frac{\tilde{q}_bp_b(1-p_b)(1-2p_b)}{\sqrt{p_b(1-p_b)}\left(\sum_{a=1}^h\tilde{q}_ap_a(1-p_a)\wbar_a\right)}\left(\beta_{1,b} - \beta_1\right)(w_{jb} - \wbar_b)\\
& + \sqrt{\frac{1-p_b}{p_b}}\frac{\tilde{q}_bp_b(1-p_b)\wbar_b}{\sum_{a=1}^h\tilde{q}_ap_a(1-p_a)\wbar_a}\left(\frac{w_{jb}\left(\bar{Y}_{jb}(1) - \overline{\bm{x}}_{jb}\bm{\gamma}\right)}{\wbar_b} -\frac{w_{jb}\left(\barbarY_b(1) - \xbarbar_b\bm{\gamma}\right)}{\wbar_b} \right)\\
& + \sqrt{\frac{p_b}{1-p_b}}\frac{\tilde{q}_bp_b(1-p_b)\wbar_b}{\sum_{a=1}^h\tilde{q}_ap_a(1-p_a)\wbar_a}\left(\frac{w_{jb}\left(\bar{Y}_{jb}(0) - \overline{\bm{x}}_{jb}\bm{\gamma}\right)}{\wbar_b} - \frac{w_{jb}\left(\barbarY_b(0) - \xbarbar_b\bm{\gamma}\right)}{\wbar_b}\right) \Bigg)^2.
\end{align*}

\subsection{Asymptotic Efficiency Losses Using the Individual Data}
Corollary 1 to Theorem 1 in the main text shows that aggregating
(averaging) the data to the cluster level produces asymptotically
efficient ATE estimators. To quantify efficiency losses using the
individual data, we first express the \(\bm{\gamma}\) parameter as a
weighted average of between- and within-cluster population regression
parameters, \(\bm{\gamma}_{\bm{B}}\) and
\(\bm{\gamma}_{\bm{W}}\), using
\(\bm{\gamma} = \bm{\mathit{\Gamma}}_{\bm{x}}\bm{\gamma}_{B}+(\bm{I}_{{v\times v}}-\bm{\mathit{\Gamma}}_{\bm{x}})\bm{\gamma}_{\bm{W}}\).
In this expression, \(\bm{\gamma}_{\bm{B}}\) is defined
analogously to \(\bm{\gamma}\) in Section A.3.3 in the supplementary
materials (and in Remark 2 of the main text) by replacing
\(\bm{S}_{\bm{x},b}^{2}\) with
$\bm{S}_{\bm{x},b,\bm{B}}^{2} = \frac{1}{m_{b}}\sum_{j = 1}^{m_{b}}w_{jb}\left( {\overline{\bm{x}}}_{jb} - \xbarbar_{b} \right)^{\prime}\left( {\overline{\bm{x}}}_{jb} -\xbarbar_{b} \right)$,
and using parallel cluster-level versions of
\(\bm{S}_{\bm{x}, Y, b}^{2}\left( 1 \right)\)
and
\(\bm{S}_{\bm{x},Y,b}^{2}\left( 0 \right)\).
This is the regression parameter that would be obtained using the full
schedule of potential outcomes and covariates aggregated to the cluster
level. We define \(\bm{\gamma}_{\bm{W}}\) similarly, for
example, by replacing
\(\bm{S}_{\bm{x},b}^{2}\) with
\(\bm{S}_{\bm{x},b,\bm{W}}^{2} = \frac{1}{m_{b}}\sum_{j = 1}^{m_{b}}{\sum_{i = 1}^{n_{jb}}w_{ijb}}\left( \bm{x}_{ijb} - {\overline{\bm{x}}}_{jb} \right)^{\prime}\left( \bm{x}_{ijb} - {\overline{\bm{x}}}_{jb} \right)\).
The intraclass correlation coefficient (ICC) matrix,
\(\bm{\mathit{\Gamma}}_{\bm{x}}\), is defined as
\(\bm{\mathit{\Gamma}}_{\bm{x}} = {(\bm{S}_{\bm{x},\bm{B}}^{2} + \bm{S}_{\bm{x},\bm{W}}^{2})}^{- 1}\bm{S}_{\bm{x},\bm{B}}^{2}\),
where
\(\bm{S}_{\bm{x},\bm{B}}^{2} = \sum_{b = 1}^{h}{q_{b}\bm{S}_{\bm{x},b,\bm{B}}^{2}}\),
\(\bm{S}_{\bm{x},\bm{W}}^{2} = \sum_{b = 1}^{h}{q_{b}\bm{S}_{\bm{x},b,\bm{W}}^{2}}\),
and \(\bm{I}_{v \times v}\) is the identity matrix.
This ICC matrix measures the proportions of the total variances and covariances of the covariates that are due to between-cluster variation.

Next note that \(S_{D_{b}}^{2}\left( 1 \right)\) and \(S_{D_{b}}^{2}\left( 0 \right)\ \)in (8) in the main text are based on cluster-level residuals.
Thus, the within-cluster covariates, (\(\bm{x}_{ij} - {\overline{\bm{x}}}_{j}\)), have no effect on \(S_{D_{b}}^{2}\left( t \right)\), and hence, on precision.
Accordingly, \(\bm{\gamma}\) as defined using the individual data is inefficient.
Intuitively, the within-cluster covariates ``contaminate'' \(\bm{\gamma}\). Comparing \(\bm{\gamma}\) to the optimal \(\bm{\gamma}_{B}\), the difference is \((\bm{I}_{v \times v} - \bm{\mathit{\Gamma}}_{\bm{x}})(\bm{\gamma}_{\bm{W}} - \bm{\gamma}_{\bm{B}})\).
Thus, asymptotic efficiency losses using the individual data will depend on \(\bm{\mathit{\Gamma}}_{\bm{x}}\) and the relative values of \(\bm{\gamma}_{\bm{B}}\) and \(\bm{\gamma}_{\bm{W}}\).
For instance, if the model only contains cluster-level covariates, then \(\bm{\mathit{\Gamma}}_{\bm{x}} = \bm{I}_{v \times v}\) and there are no efficiency losses.
However, if the covariates vary both within and between clusters, \(\bm{\gamma}\) is inefficient.

\subsection{TX collinearity ($R^2$) results using the individual and aggregate data}
We next argue for the following two-part result:
\\
\\
\textbf{Result:}

Consider a non-blocked, clustered RCT design (where we consider a
non-blocked design to reduce notation). Further let the covariate
matrix, \(\bm{X},\) have full rank. Then we have:

\begin{enumerate}
\def\labelenumi{\arabic{enumi}.}
\item
  For any treatment allocation, the \(R_{TXB}^{2}\) value from a
  WLS regression of \({\widetilde{T}}_{j}\) on
  \({\widetilde{\overline{\bm{x}}}}_{j} = ({\overline{\bm{x}}}_{j} - \overline{\overline{\bm{x}}})\)
  using the aggregate data will be at least as large as the
  \(R_{TX}^{2}\) value from a WLS regression of
  \({\widetilde{T}}_{j}\) on \({\widetilde{\bm{x}}}_{ij}\)
  using the individual data.
\item
  If we assume equal cluster sizes (\(\frac{n}{m}\)) and weights of 1,
  we can approximate \(E{(R}_{TXB}^{2})\) using \(\frac{v}{m}\)
  and \(E{(R}_{TX}^{2})\) using
  \(\frac{tr(\bm{\mathit{\Gamma}}_{\bm{x}})}{m} + \frac{(v - tr\left( \bm{\mathit{\Gamma}}_{\bm{x}} \right))}{n}\),
  where \(tr\) is the trace operator and
  \(\bm{\mathit{\Gamma}}_{\bm{x}}\) is a \(v\times v\) matrix of
  intraclass correlation coefficients (ICCs) for the covariates (defined
  below). Here, the expectation is taken over the randomization
  distribution.
\end{enumerate}

\begin{proof}
 To establish that the aggregate \(R_{TXB}^{2}\) is is larger than the individual \(R_{TX}^{2}\) for a given treatment
allocation, we first define \({\widehat{\bm{\lambda}}}_{B}\) to be
the estimated parameter from a WLS regression of \({\widetilde{T}}_{j}\)
on
\({\widetilde{\overline{\bm{x}}}}_{j} = ({\overline{\bm{x}}}_{j} - \overline{\overline{\bm{x}}})\)
using the aggregate data. We have that
\({\widehat{\bm{\lambda}}}_{B} = ({\widetilde{\overline{\bm{X}}}}^{\prime}\bar{\bm{W}} \widetilde{\overline{\bm{X}}})^{- 1}{\widetilde{\overline{\bm{X}}}}^{\prime}\bar{\bm{W}}\widetilde{\bm{T}}\),
where \(\widetilde{\overline{\bm{X}}}\) is an \(m \times v\)
matrix of \({\widetilde{\overline{\bm{x}}}}_{j}\) values,
\(\bar{\bm{W}}\) is an \(m \times m\) diagonal matrix of
\(w_{j}\) weights, and \(\widetilde{\bm{T}}\) is a vector of
\({\widetilde{T}}_{j} = T_{j} - p^{*}\) values, with
\(p^{*}\  = \frac{1}{\sum_{j = 1}^{m}w_{j}}\sum_{j = 1}^{m}{T_{j}w_{j}}\).
Next, using the individual data, consider a WLS regression of
\({\widetilde{T}}_{j}\) on between- and within-cluster covariates,
\(({\overline{\bm{x}}}_{j} - \overline{\overline{\bm{x}}})\)
and
\(\left( \bm{x}_{ij} - {\overline{\bm{x}}}_{j} \right)\),
with associated parameter vectors, \(\bm{\pi}_{B}\) and
\(\bm{\pi}_{W}\). We have
\({\widehat{\bm{\pi}}}_{W}= \bm{0}\) because treatment
assignments are at the cluster level, and thus,
\({\widehat{\bm{\pi}}}_{B} = {\widehat{\bm{\lambda}}}_{B}\).

If we now instead use the individual data to regress
\({\widetilde{T}}_{j}\) on a single set of covariates,
\({\widetilde{\bm{x}}}_{ij}\), with associated parameter
vector, \(\bm{\pi}\), standard decomposition results \citep[see, e.g.,][]{greene_text} establish that
\(\widehat{\bm{\pi}} = \bm{\mathit{\Gamma}}_{\bm{x}}{\widehat{\bm{\lambda}}}_{B}+ (\bm{I}_{v \times v}- \bm{\mathit{\Gamma}}_{\bm{x}}){\widehat{\bm{\lambda}}}_{W} = \bm{\mathit{\Gamma}}_{\bm{x}}{\widehat{\bm{\lambda}}}_{B}\),
where
\(\bm{\mathit{\Gamma}}_{\bm{x}} = ({\widetilde{\bm{X}}}^{\prime}\bm{W}\widetilde{\bm{X}})^{- 1}({\widetilde{\overline{\bm{X}}}}^{\prime}\bar{\bm{W}}\widetilde{\overline{\bm{X}}})\) is
a \(v \times v\) matrix of ICCs for the covariates (that measures the
proportions of the total variances and covariances of the covariates that
are between clusters), \(\widetilde{\bm{X}}\) is an \( n \times v\)
matrix of \({\widetilde{\bm{x}}}_{j}\) values,
\(\bm{W}\) is an \(n \times n\) diagonal matrix of \(w_{ij}\)
weights, and \(\bm{I}_{v \times v}\) is the identity
matrix.

Using these results and noting that the regression \(R^{2}\) value is
the explained sum of squares divided by the total sum of squares, we
have that
\(R_{TXB}^{2} = \frac{1}{np^{*}\left( 1{- p}^{*} \right)}(\widetilde{\bm{T}}^{\prime}\bar{\bm{W}}\widetilde{\overline{\bm{X}}})({\widetilde{\overline{\bm{X}}}}^{\prime}\bar{\bm{W}}\widetilde{\overline{\bm{X}}})^{- 1}({\widetilde{\overline{\bm{X}}}}^{\prime}\bar{\bm{W}}\widetilde{\bm{T}})\),
and similarly for \(R_{TX}^{2}\), which yields

\[R_{TXB}^{2} - R_{TX}^{2} = \frac{1}{np^{*}(1{- p}^{*})}({\widetilde{\bm{T}}}^{\prime}\bar{\bm{W}}\widetilde{\overline{\bm{X}}})(\bm{I}_{v \times v} - \bm{\mathit{\Gamma}}_{\bm{x}})({\widetilde{\overline{\bm{X}}}}^{\prime}\bar{\bm{W}}\widetilde{\overline{\bm{X}}})^{- 1}({\widetilde{\overline{\bm{X}}}}^{\prime}\bar{\bm{W}}\widetilde{\bm{T}}).\]

This result establishes that
\({(R}_{TXB}^{2} - R_{TX}^{2}) \geq 0\) because
\((\bm{I}_{v \times v} - \bm{\mathit{\Gamma}}_{\bm{x}})({\widetilde{\overline{\bm{X}}}}^{\prime}\bar{\bm{W}}\widetilde{\overline{\bm{X}}})^{- 1}\)
is positive semi-definite and
\(({\widetilde{\bm{T}}}^{\prime}\bar{\bm{W}}\widetilde{\overline{\bm{X}}})\)
has full rank, with equality if and only if the covariates do not vary
within clusters (in which case
\(\bm{\mathit{\Gamma}}_{\bm{x}} = \bm{I}_{v \times v}\)).

To next establish the approximation results for
\(E\left( R_{TXB}^{2} \right)\) and
\(E\left( R_{TX}^{2} \right)\), we assume equal cluster sizes
(\(\frac{n}{m}\)) and weights of 1. In this case, for the aggregate
data, we have that
\(E\left( R_{TXB}^{2} \right) = 1 - \frac{1}{mp(1 - p)}E({\widetilde{\bm{T}}}^{\prime}\left( \bm{I} - \bm{P}_{\widetilde{\overline{\bm{X}}}} \right)\widetilde{\bm{T}})\),
where
\(\bm{P}_{\widetilde{\overline{\bm{X}}}} = {\widetilde{\overline{\bm{X}}}}^{\prime}\left( {\widetilde{\overline{\bm{X}}}}^{\prime}\widetilde{\overline{\bm{X}}} \right)^{- 1}{\widetilde{\overline{\bm{X}}}}^{\prime}\)
is the projection matrix. Using the trace operator, we have that
\(E\left( R_{TXB}^{2} \right) = 1 - \frac{1}{mp\left( 1 - p \right)}tr\left\lbrack \left( \bm{I} - \bm{P}_{\widetilde{\overline{\bm{X}}}} \right)\bm{\Lambda} \right\rbrack\), where \(\bm{\ \Lambda} =E\left( \widetilde{\bm{T}}{\widetilde{\bm{T}}}^{\prime} \right)\)
has diagonal elements, \(p(1 - p)\), and off-diagonal covariances,
\(- \frac{p(1 - p)}{(m - 1)}\). If we ignore the small covariances in
\(\bm{\Lambda}\), we find that
\(E\left( R_{TXB}^{2} \right) \approx \frac{v}{m}\). Similarly,
for the individual data, we have that
\(E\left( R_{TX}^{2} \right) = 1 - \frac{1}{mp(1 - p)}E({\widetilde{\bm{T}}}^{\prime}\left( \bm{I} -\bm{P}_{\widetilde{\bm{X}}} \right)\widetilde{\bm{T}})\),
where
\(\bm{P}_{\widetilde{\bm{X}}} = {\widetilde{\overline{\bm{X}}}}^{\prime}\bm{\mathit{\Gamma}}_{\bm{x}}\left( {\widetilde{\overline{\bm{X}}}}^{\prime}\widetilde{\overline{\bm{X}}} \right)^{- 1}{\widetilde{\overline{\bm{X}}}}^{\prime}\)
is the new projection matrix. If we again use the trace operator and
ignore the off-diagonal elements in \(\bm{\Lambda}\), we find that
\(E\left( R_{TX}^{2} \right) \approx \frac{tr(\bm{\mathit{\Gamma}}_{\bm{x}})}{m}\).
To avoid a zero expected value when
\(\text{tr}\left( \bm{\mathit{\Gamma}}_{\bm{x}} \right) = 0\), we instead
use an alternative approximation,
\(E{(R}_{TX}^{2}) \approx \ \frac{tr\left( \bm{\mathit{\Gamma}}_{\bm{x}} \right)}{m} + \frac{(v - tr\left( \bm{\mathit{\Gamma}}_{\bm{x}} \right))}{n}\).
\end{proof}

To motivate this approximation for \(E{(R}_{TX}^{2})\), we
express it as \(E{(R}_{TX}^{2}) \approx \ \frac{v}{n^{*}}\),
where
\(n^{*} = \frac{n}{1 + {\overline{\rho}}_{X}\left( \overline{n} - 1 \right)}\)
is the ``effective'' sample size of individuals,
\({\overline{\rho}}_{X} =\)
\(\frac{tr\left( \bm{\mathit{\Gamma}}_{\bm{x}} \right)}{v}\) is
the average covariate ICC for the variances, and
\(\overline{n} = \frac{n}{m}\) is the average cluster size. The
denominator term,
\(\lbrack 1 + {\overline{\rho}}_{X}\left( \overline{n} - 1 \right)\rbrack\),
can be considered a design effect due to covariate clustering. As
\({\overline{\rho}}_{X}\) approaches 0, \(n^{*}\) approaches \(n\)
(minimum design effects), whereas as \({\overline{\rho}}_{X}\)
approaches 1, \(n^{*}\) approaches \(m\) (maximum design effects).

Table A.1 displays simulation results that support the use of these
approximations. For the simulations, we randomly generated
\(v = 2,\ 5,\) or \(10\) covariates using \(x_{ij1} = u_{j1} + e_{ij1}\)
and
\(x_{ijk} = \theta x_{ij(k - 1)} + u_{jk} + e_{ijk}\)
for \(k = 2,\ldots,v\), where \(u_{jq}\) and \(e_{jq}\)
are independently and identically distributed normal random errors with
mean zero. The within-cluster errors, \(e_{jq}\), were set to
have variance 1, and the variances of the between-cluster errors,
\(u_{jq}\), were calculated based on assumed ICC values of
\(\rho_{X} = 0,\ 0.4,\) or 0\(.8\). The parameter, \(\theta\), was
calculated to generate a 0\(.5\) correlation coefficient (\(r\)) between
\(x_{ijk}\) and \(x_{ij(k - 1)}\). Specifically, we set
\(\theta = \frac{r}{\sqrt{1 - r^{2}}} = .577\). The number of clusters
ranged from \(m = 20\) to \(60\), and to allow for some unbalance in the
design, we set \(p = .6\) and allowed the number of individuals per
cluster to range uniformly between \(n_{j} = 25\) and \(75\). For each
model specification, we generated a single dataset of covariates and
then generated 500 draws from the randomization distribution. We
repeated this procedure 10 times to avoid unusual base datasets and
calculated average results. Table A.1 presents mean
\(R_{TXB}^{2}\) and \(R_{TX}^{2}\) values across the 5,000
simulations using the individual and aggregate data (the ``true''
values) as well as the \(E(R_{TXB}^{2})\) and
\(E(R_{TX}^{2})\) approximations from above.

The results indicate that the \(E(R_{TXB}^{2})\) approximations
using the aggregate data are close to true values, with slight downward
biases as \(\rho_{X}\) values increase. The \(E(R_{TX}^{2})\)
approximations are also reasonably close to true values, with no
patterns of downward or upward biases.

\includepdf[pages=1-last]{"suppAtable".pdf}

\subsection{Details Comparing Design-Based and CRSE Estimators}
Using individual-level data, the CRSE variance estimator for the WLS coefficients for model  (5) in the main text that includes baseline covariates is:
\begin{align}
\widehat{\text{Var}}_{\text{CRSE}} (\hat{\bm{\delta}} )=g(\bm{Z}^{\prime} \bm{W}\bm{Z})^{-1} \left(\sum_{b=1}^h\sum_{j=1}^{m_b}\bm{Z}_{jb}^{\prime}  \bm{W}_{jb} \hat{\bm{e}}_{jb} \hat{\bm{e}}_{jb}^{\prime}  \bm{W}_{jb}  \bm{Z}_{jb}\right) (\bm{Z}^{\prime}  \bm{W}\bm{Z})^{-1}, \label{eq:crse_est}
\end{align}
where $\bm{Z}$ is an $n\times(2h+v)$ matrix with the full set of independent variables (the block-by-treatment status interactions, block indicators, and covariates); $\hat{\bm{\delta}}$ is the corresponding vector of coefficient estimates; $\bm{W}$ is an $n\times n$ symmetric weight matrix with diagonal entries $w_{ijb}$ and all other entries 0; $\bm{W}_{jb}$ are $n_{jb} \times n_{jb}$ submatrices of $\bm{W}$ defined for each cluster; $\hat{\bm{e}}_{jb}$ is a vector of WLS residuals for each cluster; and $g$ is a small sample correction term discussed in Section 4.3 in the main text. 

Consider a single block-by-treatment ATE estimator from the regression model. In this case, after some algebra, the CRSE variance estimator in (\ref{eq:crse_est}) reduces to
\begin{align*}
\widehat{\text{Var}}_{\text{CRSE}}(\hat{\beta}_{1,b} )=g \frac{s_{D_b}^{2*} (1)}{m_b^1}+g \frac{s_{D_b}^{2*} (0)}{m_b^0 },	
\end{align*}
where $s_{D_b}^{2*} (1)=\frac{m_b^1-1}{m_b^1} s_{D_b}^2 (1)$ and $s_{D_b}^{2*} (0)=\frac{m_b^0-1}{m_b^0} s_{D_b}^2 (0)$.
This establishes (10) in the main text. 

Next consider adding covariates to the model. In this case, using (\ref{eq:crse_est}), the CRSE variance estimator for $\hat{\beta}_{1,b} $ becomes
\begin{align*}
\widehat{\text{Var}}_{\text{CRSE}}(\hat{\beta}_{1,b} )=g \frac{\sum_{s=1}^h\sum_{j=1}^{m_s}\hat{\bm{\xi}}^{\prime}_{js,b}\bm{W}_{js}\hat{\bm{e}}_{js}\hat{\bm{e}}_{jb}^{\prime}\bm{W}_{js}\hat{\bm{\xi}}_{js,b}}{\left(\sum_{s=1}^h\sum_{j=1}^{m_s}\hat{\bm{\xi}}^{\prime}_{js,b}\bm{W}_{js}\hat{\bm{\xi}}_{js,b}\right)^2 },
\end{align*}
where $\hat{\bm{\xi}}_{js,b}$ is the residual in block $s$ from a weighted regression of $S_{ijb} \tilde{T}_{jb}$ on the other block-by-treatment status interactions and covariates in $\bm{Z}$.
In comparison, the design-based estimator in (9) of the main text can be expressed in a parallel matrix form as

\begin{align*}
\widehat{\text{Var}}(\hat{D}_b )=&\frac{m_b^1}{(m_b^1-v^* p_b^* q_b^*-1) }\frac{\sum_{s=1}^h\sum_{j: T_{js}=1}^{m_s}\hat{\bm{\eta}}^{\prime}_{js,b}\bm{W}_{js}\hat{\bm{e}}_{js}\hat{\bm{e}}_{jb}^{\prime}\bm{W}_{js}\hat{\bm{\eta}}_{js,b}}{\left(\sum_{s=1}^h\sum_{j: T_{js}=1}^{m_s}\hat{\bm{\eta}}^{\prime}_{js,b}\bm{W}_{js}\hat{\bm{\eta}}_{js,b}\right)^2}\\
& + \frac{m_b^0}{(m_b^0-v^* (1-p_b^*) q_b^*-1) }\frac{\sum_{s=1}^h\sum_{j: T_{js}=0}^{m_s}\hat{\bm{\eta}}^{\prime}_{js,b}\bm{W}_{js}\hat{\bm{e}}_{js}\hat{\bm{e}}_{jb}^{\prime}\bm{W}_{js}\hat{\bm{\eta}}_{js,b}}{\left(\sum_{s=1}^h\sum_{j: T_{js}=0}^{m_s}\hat{\bm{\eta}}^{\prime}_{js,b}\bm{W}_{js}\hat{\bm{\eta}}_{js,b}\right)^2},
\end{align*}
where $\hat{\bm{\eta}}_{js,b}$ is now the residual in block s from a weighted regression of $S_{ijb} \tilde{T}_{jb}$ on the other  block-by-treatment status interactions, but \textit{not} the covariates.
As with the no covariate case, the two estimators differ in their degrees of freedom adjustments.
However, they now also differ in the handling of the correlations between the treatment indicators and covariates ($\hat{\bm{\xi}}_{js,b}$ versus $\hat{\bm{\eta}}_{js,b}$).
The two variance estimators, however, are again asymptotically equivalent.

\includepdf[pages=1-last]{"appendBtext".pdf}
\includepdf[landscape=true, pages=1-last]{"tableB1".pdf}
\includepdf[pages=1-last]{"tableB2".pdf}
\includepdf[landscape=true, pages=1-last]{"tableB3".pdf}
\includepdf[pages=1-last]{"tableB45".pdf}
\includepdf[landscape=true, pages=1-last]{"tableB6".pdf}


\bibliographystyle{apalike}
\bibliography{clt_ref}{}
\end{document}